\let\csname equation*\endcsname\relax
\let\csname endequation*\endcsname\relax
\newtheorem{definition}{Definition}
\newcommand{\be}{\begin{equation}}
\newcommand{\ee}{\end{equation}}
\newcommand{\bff}{\boldsymbol}
\newtheorem{thm}{Theorem}[section]
\newtheorem{prop}[thm]{Proposition}
\newtheorem{lemma}[thm]{Lemma}
\newtheorem{remark}[thm]{\it Remark}
\newtheorem{example}[thm]{\it Example}
\newenvironment{proof}{\paragraph{Proof:}}{\hfill$\square$}
\newlength{\perspective}
\begin{document}

\title[]{Non-Abelian hierarchies of compatible maps, associated integrable difference systems and Yang-Baxter maps}

\author{Pavlos Kassotakis}
\address{Pavlos Kassotakis, Department of Mathematical Methods in Physics, Faculty of Physics,
University of Warsaw, Pasteura 5, 02-093, Warsaw, Poland}
\ead{Pavlos.Kassotakis@fuw.edu.pl, pavlos1978@gmail.com}

\date{\today}


\begin{abstract}
We present two non-equivalent families of  hierarchies of non-Abelian compatible maps and we provide their Lax pair formulation.
These maps are associated with families of  hierarchies of non-Abelian Yang-Baxter maps, which we provide explicitly.  In addition, these hierarchies correspond to integrable difference systems with variables defined on edges of an elementary cell of the $\mathbb{Z}^2$ graph, that in turn lead to
 hierarchies of difference systems with variables defined on  vertices of the same cell. In that respect we obtain  the non-Abelian  lattice-modified Gel'fand-Dikii hierarchy, together with the explicit form of a non-Abelian hierarchy that we refer to as the lattice-NQC (or lattice-$(Q3)_0$) Gel'fand-Dikii hierarchy.
\end{abstract}

\ams{37K60, 39A14, 37K10, 16T25}
\vspace{2pc}
\noindent{\it Keywords}: $3D-$compatibe maps,  non-Abelian integrable difference systems, Yang-Baxter maps




\section{Introduction}
Unlike the scalar case, contributions on  integrable multi-component and integrable difference systems defined on higher order stencils, are rather sparse \cite{Nijhoff:1996,Korepanov:1997,Tongas:2004,Maruno:2010,Hietarinta:2011,JamesPhd,Hay:2013,Hay:2014,Mikhailov:2016,
Nalini:2018,Kels:2019,Kels:2019II,Kass2,Kamp:2020,Hietarinta:2020,Franks-book}.   Furthermore, unlike the continuous setting, contributions on non-Abelian generalizations and extensions of integrable difference systems are rather scattered in the literature \cite{Nijhoff:1990,Boris:2000,Dimakis:2002,bs:2002N,Field:2005,Nimmo:2006} and moreover quite rare when they concern hierarchies \cite{Doliwa_2013,Doliwa_2014,Noumi:2020,Kassotakis:1:2021,Dimakis_Korepanov:2021}.

The results of this paper serve in the renewed and growing interest in deriving and extending integrable difference systems and structures to the
non-Abelian domain \cite{Doliwa_2013,Doliwa_Painleve_2013,Grahovski:2016,Rizos:20182,Kass1,Papamikos:2021,Kassotakis:2:2021,Doliwa:2022,Doliwa2:2022}.
 The term non-Abelian, refers to the requirement that the multiplication operator is no longer Abelian. In that respect, the variables that participate on  difference systems do not a-priori mutually commute.

Specifically, in this article we introduce non-Abelian hierarchies of integrable difference systems in edge and in vertex variables. In detail,
  we derive two  $3D-$compatible non-Abelian hierarchies  of maps that we refer to as the  $\mathcal{K}^{(i)},$ $i=0,1$ hierarchies.  We  prove their multidimensional compatibility  and we  implicitly provide the companion maps that constitute hierarchies of Yang-Baxter maps. These companion hierarchies are the ones associated with the so-called $\mathcal{K}_{I},$ and $\Lambda_I$  Yang-Baxter maps introduced in \cite{Kassotakis:2:2021}. In the Abelian setting,  $\mathcal{K}_{I}$ and $\Lambda_I$ maps are equivalent to the so-called Harrison map\footnote{The Harrison map was  derived in \cite{pap2-2006} and serves as the nonlinear superposition formula for the B\"acklund transformation of the Ernst equation \cite{Harrison:1978} in general relativity}  a.k.a $H_I$ in \cite{Papageorgiou:2010}, so in that respect and in the Abelian restriction, our results provide the hierarchy of the $H_I$ Yang-Baxter map. In addition, we  show that both  $\mathcal{K}^{(i)}$ hierarchies  arise through Lax pair formulation by  deforming known Lax matrices.

Moreover,  these hierarchies serve as  deformations of the edge-variable avatars of the non-Abelian lattice Gel'fand-Dikii hierarchies introduced in \cite{Doliwa_2013}.  In the Abelian  case lattice Gel'fand-Dikii hierarchies in  vertex and edge  form  were introduced  \cite{Nijhoff:1992,Ni1}, furthermore, modifications and extensions of their lower order members as well as the hierarchies themselves can be found in \cite{Tongas:2004,Maruno:2010,Hietarinta:2011,JamesPhd,Nijhoff:2011,Atkinson:2012,Zhang:2012,Fordy_2017}.
 On top of that, it is shown here that both $\mathcal{K}^{(i)}$  hierarchies could also be obtained through periodic reductions of   deformed versions of the non-Abelian Hirota-Miwa system. Note that  the non-Abelian  Hirota-Miwa system was firstly introduced in \cite{Nimmo:2006},  in the Abelian setting it was earlier introduced  in \cite{Noumi:2002}.

Each one of the $\mathcal{K}^{(i)}$  hierarchies gives rise to two hierarchies of integrable vertex systems i.e. integrable difference systems with variables defined of the vertices of an elementary cell of the $\mathbb{Z}^2$ graph. It turns out that the vertex systems associated with $\mathcal{K}^{(0)}$ are point equivalent to the ones associated with the $\mathcal{K}^{(1)}$ hierarchy. So from $\mathcal{K}^{(0)}$ we obtain  the non-Abelian lattice-modified Gel'fand-Dikii hierarchy as well as the explicit form of  a hierarchy that we refer to as the {\em non-Abelian lattice-$NQC$ (or lattice-$(Q3)_0$) Gel'fand-Dikii} hierarchy. The non-Abelian lattice-modified Gel'fand-Dikii hierarchy was introduced in \cite{Doliwa_2013} whereas in the Abelian case it was firstly implicitly provided in \cite{Nijhoff:1992} and explicitly in \cite{Atkinson:2012}. The first  member of the non-Abelian lattice-$NQC$ Gel'fand-Dikii hierarchy is the so-called $NQC$ integrable lattice equation that was firstly introduced in \cite{nij-qui-cap}, cf. also \cite{nij-qui-cap2}. Note that the $NQC$ integrable lattice equation is gauge equivalent to the lattice equation that is referred as $(Q3)_0$  in \cite{ABS}.
The whole hierarchy  in the Abelian case,  was implicitly provided in  \cite{Nijhoff:2011}, whereas  its second member i.e. the Boussinesq analogue of $(Q3)_0,$  was explicitly derived in \cite{Zhang:2012}.

We start this manuscript with a brief  introduction. We continue in  Section \ref{Section:2}, where we present the basic notions and definitions used throughout this paper. In addition, we recall from the literature two Lax matrices that play a crucial role to this work. In Section \ref{Section:3}, we deform these Lax matrices and  we obtain two non-Abelian integrable hierarchies of $3D-$compatible maps, the $\mathcal{K}^{(i)},$ $i=0,1$ hierarchies.  Furthermore, we prove their multidimensional compatibility  and we provide implicitly their corresponding Yang-Baxter maps. These hierarchies of maps, are naturally associated with  non-Abelian hierarchies of integrable difference systems with variables defined on the edges of an elementary cell of the $\mathbb{Z}^2$ graph. We proceed with Section \ref{Section:4}, where we associate with the $\mathcal{K}^{(0)}$ hierarchy, two integrable hierarchies of non-Abelian difference systems defined on  the vertices of an elementary cell of the $\mathbb{Z}^2$ graph. Thus we obtain explicitly  the non-Abelian lattice-modified  and the lattice-$(Q3)_0$ Gel'fand-Dikii hierarchies. Finally in Section \ref{Section:6}, we present some ideas on further  research. We conclude this article with  \ref{app1} where we present  non-Abelian  forms of the lattice-potential KdV equation.

\section{Notation, definitions and the  Lax matrices $L^{(N,j)}$ and $M^{(N,j)},$ $j=1,\ldots,N-1,$  $N\geq 2 \in \mathbb{N}$} \label{Section:2}

Here we present the basic objects and definitions that will be considered in this paper. Firstly, let $\mathcal{S}$ be any set. We proceed with the following definitions.

\begin{definition} \label{def:1:1}
The  maps $R: \mathcal{S} \times \mathcal{S} \rightarrow \mathcal{S} \times \mathcal{S}$ and $\widehat R: \mathcal{S} \times \mathcal{S} \rightarrow \mathcal{S} \times \mathcal{S}$  will be called  equivalent if there exists a bijection $\kappa: \mathcal{S} \rightarrow \mathcal{S}$ such that $(\kappa\times \kappa) \circ R=\widehat R \circ (\kappa\times \kappa).$
\end{definition}

 \begin{definition}[$3D-$compatible/consistent maps \cite{ABS:YB}]
Let $Q: \mathcal{S} \times \mathcal{S}\ni({\bf x},{\bf y})\mapsto ({\bf u}, {\bf v})=(f({\bf x},{\bf y}),g({\bf x},{\bf y})) \in \mathcal{S} \times \mathcal{S},$ be a map and  $Q_{ij}$  $i\neq j\in\{1,2,3\},$ be the maps that act as $Q$ on the $i-$th and $j-$th factor of $\mathcal{S} \times \mathcal{S}\times \mathcal{S}$ and as identity to the remaining factor.  In detail we have
\begin{align*}
Q_{12}:({\bf x},{\bf y},{\bf z})\mapsto(\widehat {\bf x},\widetilde {\bf y},{\bf z})=(f({\bf x},{\bf y}),g({\bf x},{\bf y}),{\bf z}),\\
Q_{13}:({\bf x},{\bf y},{\bf z})\mapsto(\bar {\bf x},{\bf y},\widetilde {\bf z})=(f({\bf x},{\bf z}),{\bf y},g({\bf x},{\bf z})),\\
Q_{23}:({\bf x},{\bf y},{\bf z})\mapsto({\bf x},\bar {\bf y},\widehat {\bf z})=({\bf x},f({\bf y},{\bf z}),g({\bf y},{\bf z})).
\end{align*}
The map $Q: \mathcal{S} \times \mathcal{S}\rightarrow \mathcal{S} \times \mathcal{S}$ will be called {\em 3D-compatible} or {\em 3D-consistent map}  if it holds
${\bar {\widehat {\bf x}}}={\widehat {\bar {\bf x}}},$  ${\bar {\widetilde {\bf y}}}={\widetilde {\bar {\bf y}}},$ ${\widehat {\widetilde {\bf z}}}={\widetilde {\widehat {\bf z}}},$ that is
\begin{align}\label{3d:comp:def1}
f(\bar {\bf x},\bar {\bf y})=f(\widehat {\bf x},\widehat {\bf z}),&&\mbox{or}&&f\left(f({\bf x},{\bf z}),f({\bf y},{\bf z})\right)=f\left(f({\bf x},{\bf y}),g({\bf y},{\bf z})\right),
\end{align}
\begin{align} \label{3d:comp:def2}
g(\bar {\bf x},\bar {\bf y})=f(\widetilde {\bf y},\widetilde {\bf z}),&&\mbox{or}&&g\left(f({\bf x},{\bf z}),f({\bf y},{\bf z})\right)=f\left(g({\bf x},{\bf y}),g({\bf x},{\bf z})\right),
\end{align}
\begin{align} \label{3d:comp:def3}
g(\widehat {\bf x},\widehat {\bf z})=g(\widetilde {\bf y},\widetilde {\bf z}),&&\mbox{or}&&g\left(f({\bf x},{\bf y}),g({\bf y},{\bf z})\right)=g\left(g({\bf x},{\bf y}),g({\bf x},{\bf z})\right).
\end{align}
\end{definition}

Relations (\ref{3d:comp:def1})-(\ref{3d:comp:def3}) serve as the compatibility (consistency) relations of the map $Q$ on the cube (see Figure \ref{1st_figure_l}). Consequently, they imply compatibility on any $3$ dimensional  face of the $n$ dimensional cube $\mathbb{Q}_n.$ We refer to this property as multidimensional compatibility or equivalently as multidimensional consistency property.

\begin{definition}[Quadrirational maps and their companion maps \cite{et-2003,ABS:YB}]
A map $R: \mathcal{S} \times \mathcal{S}\ni({\bf x},{\bf y})\mapsto ({\bf u},{\bf v}) \in \mathcal{S} \times \mathcal{S}$   will be called {\em quadrirational}, if both the map $R$ and the so-called {\em companion map} $cR: \mathcal{S} \times \mathcal{S}\ni({\bf x},{\bf v})\mapsto ({\bf u},{\bf y}) \in \mathcal{S} \times \mathcal{S},$ are birational maps.
\end{definition}

An alternative notion that incorporates the $3D-$compatibility property of a map,  is the so-called  {\em Yang-Baxter property}. The maps that satisfy the Yang-Baxter property will be called {\em Yang-Baxter maps}. Note that if a $3D-$compatible map is quadrirational, its   companion map is a  Yang-Baxter map.


\begin{definition}[Yang-Baxter maps \cite{Sklyanin:1988,Drinfeld:1992}]
A map $R: \mathcal{S} \times \mathcal{S}\ni({\bf x},{\bf y})\mapsto ({\bf u}, {\bf v})=(s({\bf x},{\bf y}),t({\bf x},{\bf y}))\in \mathcal{S}\times \mathcal{S},$ will be called a {\em Yang-Baxter map} if it satisfies 
\begin{align} \label{YANG_BAXTER}
R_{12}\circ R_{13}\circ R_{23}= R_{23}\circ R_{13}\circ R_{12},
\end{align}
where $R_{ij}$ $i\neq j\in\{1,2,3\},$ denotes the maps that act as  $R$ on the $i-$th and the $j-$th factor of $\mathcal{S}\times \mathcal{S}\times \mathcal{S},$ and as identity to the remaining factor.
\end{definition}

 Yang-Baxter maps serve as {\em set-theoretical-solutions} of the functional Yang-Baxter equation (\ref{YANG_BAXTER})  and the first instances of such solutions  appeared in \cite{Sklyanin:1988,Drinfeld:1992}.  Note that the term {\em Yang-Baxter maps} was introduced in \cite{Bukhshtaber:1998,Veselov:20031}.

  Yang-Baxter property, as a compatibility property, uses another set of initial data on the cube than the $3D-$ compatibility property (see Figure \ref{1st_figure_l}). In that respect, when the maps $R,Q$ are quadrirational and $R=cQ,$ the Yang-Baxter property is equivalent to the  $3D-$compatibility property. So, as soon as a Yang-Baxter map is quadrirational, its companion is $3D-$compatible and vice versa, irrespectively of the  underlying sets on which the map acts. Note that in general the companion map $cQ$ of  a $3D-$compatible map $Q$  has different functional form than $Q,$ so it  cannot be a Yang-Baxter map at the same time.  There exist though cases  where the companion map $cQ$ of a   $3D$-compatible map $Q$ coincides (has the same functional form) with  $Q,$  then the map $Q$ shares both the Yang-Baxter and the $3D$-compatibility property \cite{ABS:YB}.

 \setlength{\perspective}{14pt}
    \def\isofactor{0.5}
    \begin{figure}[h]
\begin{minipage}[htb]{0.45\textwidth}
\adjustbox{scale=0.99,center}{
\begin{tikzcd}[every arrow/.append style={}, row sep={38,between origins}, column sep={38,between origins}]
\arrow[dashrightarrow, from=7-1,to=1-2,"Q_{23}" {description},crossing over] \arrow[dashrightarrow,from=5-2,to=2-4,"Q_{13}" {description},crossing over, shorten=6mm, end anchor={[yshift=-5mm]north east}] \arrow[dashrightarrow,from=7-1,to=6-4,"Q_{12}" {description},crossing over] 
    &[-\perspective] |[alias=A]| \ar[dash,"{\bar {\widehat x}}" {description},shorten <= -5,shorten >= -5]{rrd}\ar[dash,"{\widehat z}" {description},shorten <= -6.5,shorten >= -1]{dddd}\ar[dash,"{\bar y}" {description},shorten <= -5,shorten >= -2]{ddl}  &[\perspective] &[-\perspective] \\[-38+\isofactor\perspective]
    & & & {} \ar[dash,"{\widehat {\widetilde z}}" {description},shorten <= -6.5,shorten >= -1]{dddd}\ar[dash,"{\bar {\widetilde y}}" {description},shorten <= -7,shorten >= -8]{ddl}  &   \\[-\perspective-\isofactor\perspective]
    |[alias=Z]|  \ar[dash,crossing over,"{\bar x}" {description},shorten <= -5,shorten >= -5]{rrd} \ar[dash,"z" {description},shorten <= -6.5,shorten >= -1]{dddd} &  &  &   \\[-38+\isofactor\perspective]
    & &  {} \\[\perspective-\isofactor\perspective]
    & {}  \ar[dash,"{\widehat x}" {description},shorten <= -5,shorten >= -5]{rrd} \ar[dash,"y" {description},shorten <= -6,shorten >= -2]{ddl} & &  \\[-38+\isofactor\perspective]
    & & & {} \ar[dash,"{\widetilde y}" {description},shorten <= -6,shorten >= -2]{ddl}  \\[-\perspective-\isofactor\perspective]
     \ar[dash,"x" {description},shorten <= -5,shorten >= -5]{rrd}   & &  \\[-38+\isofactor\perspective]
    {} &  & {} \ar[from=uuuu,crossing over,dash,"{\widetilde z}" {description},shorten <= -6.5,shorten >= -1]  &
    \end{tikzcd}}
    \captionsetup{font=footnotesize}
\captionof*{figure}{(a) }
\end{minipage}
\begin{minipage}[htb]{0.45\textwidth}
\adjustbox{scale=0.99,center}{
\begin{tikzcd}[every arrow/.append style={}, row sep={38,between origins}, column sep={38,between origins}]
    &[-\perspective] |[alias=A]| \ar[dash,"{\widehat {\bar x}}" {description},shorten <= -5,shorten >= -5]{rrd}\ar[dash,"{\widetilde z}" {description},shorten <= -6.5,shorten >= -1]{dddd}\ar[dash,"{\widetilde {\bar y}}" {description},shorten <= -5,shorten >= -2]{ddl}  &[\perspective] &[-\perspective] \\[-38+\isofactor\perspective]
    & & & {} \ar[dash,"z" {description},shorten <= -6.5,shorten >= -1]{dddd}\ar[dash,"{\bar y}" {description},shorten <= -7,shorten >= -8]{ddl}  &   \\[-\perspective-\isofactor\perspective]
    |[alias=Z]|  \ar[dash,crossing over,"{\bar x}" {description},shorten <= -5,shorten >= -5]{rrd} \ar[dash,"{\widetilde{\widehat z}}" {description},shorten <= -6.5,shorten >= -1]{dddd} &  & \arrow[to=A,"R_{12}" {description},crossing over] &   \\[-38+\isofactor\perspective]
    & &  {} \\[\perspective-\isofactor\perspective]
    & {}  \ar[dash,"{\widehat x}" {description},shorten <= -5,shorten >= -5]{rrd} \ar[dash,"{\widetilde y}" {description},shorten <= -6,shorten >= -2]{ddl} & &  \\[-38+\isofactor\perspective]
    & & & {} \ar[dash,"y" {description},shorten <= -6,shorten >= -2]{ddl} \arrow["R_{23}" {description}]{uul} \\[-\perspective-\isofactor\perspective]
     \ar[dash,"x" {description},shorten <= -5,shorten >= -5]{rrd}   & &  \\[-38+\isofactor\perspective]
    {} &  & {} \ar[from=uuuu,crossing over,dash,"{\widehat z}" {description},shorten <= -6.5,shorten >= -1] \arrow[to=Z,"R_{13}" {description},crossing over]  &
    \end{tikzcd}}
    \captionsetup{font=footnotesize}
\captionof*{figure}{(b) }
\end{minipage}
\caption{(a): Left hand side of the $3D-$compatibility formulas (\ref{3d:comp:def1}), (\ref{3d:comp:def3}), that is \newline
$({\bf x},{\bf y},{\bf z}) \xmapsto[]{Q_{12}} ({\widehat {\bf x}},{\widetilde {\bf y}},{\bf z}),$ $({\bf x},{\bf y},{\bf z}) \xmapsto[]{Q_{23}} ( {\bf x},{\bar {\bf y}},{\widehat {\bf z}}),$ $({\widehat {\bf x}},{\bf y},{\widehat {\bf z}}) \xmapsto[]{Q_{13}} ({\bar {\widehat  {\bf x}}}, {\bf y},{\widetilde {\widehat {\bf z}}}).$ \\ (b): Right hand side of the Yang-Baxter equation, that is \newline
$
({\bf x},{\bf y},{\bf z}) \xmapsto[]{R_{23}} ({\bf x},{\bar {\bf y}},{\widehat {\bf z}}) \xmapsto[]{R_{13}} ({\bar {\bf x}},{\bar {\bf y}},{\widetilde {\widehat {\bf z}}}) \xmapsto[]{R_{12}}  ({\widehat {\bar {\bf x}}},{\widetilde {\bar {\bf y}}},{\widetilde {\widehat {\bf z}}}).
$
}\label{1st_figure_l}
\end{figure}


\begin{definition}
A bijection $\phi: \mathcal{S}\rightarrow \mathcal{S}$ will be called a {\em symmetry} of the    map $R: \mathcal{S} \times \mathcal{S}\rightarrow \mathcal{S} \times \mathcal{S},$ if  $(\phi\times \phi)  \circ R= R  \circ (\phi\times \phi).$
\end{definition}

\begin{definition}[\cite{Veselov:2003b,Nijhoff:2002}]
The matrix $L({\bf x};\lambda)$
 will be called  the  {\em Lax matrix} of the  $3D-$compatible map $H: ({\bf x},{\bf y})\mapsto ({\bf u},{\bf v}),$  if the relation
\begin{align} \label{laxccc}
L({\bf u};\lambda)L({\bf y};\lambda)=L({\bf v};\lambda)L({\bf x};\lambda)
\end{align}
implies  for all $\lambda$ that
$
H({\bf x},{\bf y})=({\bf u},{\bf v}).
$
$L({\bf x};\lambda)$ will be called a {\em strong Lax matrix}, if mapping $H$ is implied uniquely from (\ref{laxccc}).
\end{definition}

 Next, we define the order $N$ lower and upper-triangular nilpotent matrices.
\begin{definition}\label{np_mat}
With $\nabla^k,$ and $\Delta^k,$ $k=1,2,\ldots, N-1,$ we respectively define the order $N$ lower-triangular and upper-triangular nilpotent matrices i.e.
\begin{align*}
\left(\nabla^{(k)}\right)_{ij}:=\left\{ \begin{array}{ll}
            0,& i\leq j \\
            \delta_{i,j+k},& i>j
                \end{array} \right.&& \left(\Delta^{(k)}\right)_{ij}:=\left\{ \begin{array}{ll}
            \delta_{i+N-k,j},& i< j \\
              0,& i\geq j
                \end{array} \right.
\end{align*}
\end{definition}
 Finally, we define the notion of {\em Deformation matrices}.
\begin{definition}[Deformation matrix]\label{def_def_mat}
Let  $L({\bf x};\lambda)$ be an order $N$ strong
 Lax matrix of the  $3D-$compatible map $H: ({\bf x},{\bf y})\mapsto ({\bf u},{\bf v}).$ Fix $k\in\mathbb{N}$ with $0\leq k< N.$ The   matrix $D^{(k)}({\bf x},{\boldsymbol \alpha}),$ where ${\boldsymbol \alpha}$ a collection of constants, will be called {\em deformation matrix} and the constants ${\boldsymbol \alpha}$ deformation constants, if it holds:
 \begin{enumerate}
 \item $D^{(k)}({\bf x},{\bf 0})=\Delta^{(k)}+\nabla^{(k)},$ where $\Delta^{(k)},$ and $\nabla^{(k)}$ the order $N$ nilpotent matrices of Definition \ref{np_mat}, and ${\bf 0}$ a collection of zeros,
 \item  ${\widehat L}({\bf x};\lambda):= D^{(k)}({\bf x},{\boldsymbol \alpha}) L({\bf x};\lambda)$ serves as a strong Lax matrix for a  family of  maps ${\widehat H({\boldsymbol \alpha})}: ({\bf x},{\bf y})\mapsto ({\widehat {\bf u}},{\widehat {\bf v}}).$
 \end{enumerate}
 The Lax matrix ${\widehat L}({\bf x};\lambda)$ will be called {\em deformed Lax matrix}.
 \end{definition}
 \begin{remark} \label{First_Remark}
 When $k=0,$  for the deformation matrix $D^{(0}({\bf x},{\boldsymbol \alpha}),$ it holds
that $D^{(0)}({\bf x},{\bf 0})=I,$ where $I$ the order $N$ identity matrix, and clearly there follows ${\widehat H({\bf 0})}\equiv H.$ The deformation matrix $D^{(0)}({\bf x},{\boldsymbol \alpha})$ will be called {\em diagonal deformation matrix}. Note that a diagonal deformation matrix is not necessarily a diagonal matrix.
 \end{remark}

There is a natural correspondence of a map with a difference system  defined on the edges of an elementary quad of the $\mathbb{Z}^2$ graph. Specifically, a map $R: \mathcal{S} \times \mathcal{S}\ni (\bf x,\bf y) \mapsto  (\bf u,\bf v) \in  \mathcal{S} \times \mathcal{S},$ can be considered as a difference system defined on the edges of an elementary quadrilateral of the $\mathbb{Z}^2$ graph  by  making the following identifications
\begin{align} \label{notation1}
\begin{aligned}
{\bf x}\equiv  {\bf x}_{m+1/2,n},&&{\bf y}\equiv  {\bf y}_{m,n+1/2}\\
{\bf u}\equiv  {\bf x}_{m+1/2,n+1},&&{\bf v}\equiv  {\bf y}_{m+1,n+1/2},
\end{aligned}, && m,n\in \mathbb{Z}.
\end{align}
Moreover, we can adopt the compendious notation (see Figure \ref{fig1})
\begin{align*}
\begin{aligned}
{\bf x}:={\bf x}_{m+1/2,n},&& {\bf y}:={\bf y}_{m,n+1/2}, && {\bf x}_1:={\bf x}_{m+3/2,n}, && etc.\\
 {\bf x}_2:={\bf x}_{m+1/2,n+1}\equiv {\bf u},  &&{\bf y}_1:={\bf y}_{m+1,n+1/2}\equiv {\bf v}, && {\bf y}_2:={\bf y}_{m,n+3/2},&& etc.
\end{aligned}, && m,n\in \mathbb{Z}.
\end{align*}
Note that ${\bf x}$ could be a collection of variables, so ${\bf x}=\left(x^{(1)},\ldots, x^{(N)}\right),$ $N\in \mathbb{N}.$ In this case with ${\bf x}_2$ we denote ${\bf x}_2=\left(x^{(1)}_2,\ldots, x_2^{(N)}\right),$ that is all variables from the collection shifted in the second direction and similarly for ${\bf y}_1$.  So, unless otherwise stated and for the rest of this article, we represent the components of a vector with superscripts inside parentheses. When the superscripts denote powers,  we do not use parentheses.

\begin{figure}[h]
\begin{center}
\begin{minipage}[htb]{0.4\textwidth} \tikzcdset{every label/.append style = {font = \large}}
\begin{tikzcd}[row sep=1.5in, column sep = 1.5in,every arrow/.append style={dash}]
  \arrow[d,"\bff{y}\equiv\bff{y}_{m,n+1/2}",sloped,anchor=north]  \bff{\phi}_{m,n+1}  \arrow{r}{ \bff{u}\equiv\bff{x}_{m+1/2,n+1}}&   \bff{\phi}_{m+1,n+1}\arrow{d} \\
   \bff{\phi}_{m,n} \arrow{r}{ \bff{x}\equiv\bff{x}_{m+1/2,n}}&  \bff{\phi}_{m+1,n} \arrow[u," \bff{v}\equiv\bff{y}_{m+1,n+1/2}",sloped]
\end{tikzcd}
\captionsetup{font=footnotesize}
\captionof*{figure}{(a) Descriptive notation}
\end{minipage}\hspace{0.9cm}
\begin{minipage}[htb]{0.4\textwidth} \tikzcdset{every label/.append style = {font = \large}}
\begin{tikzcd}[row sep=1.5in, column sep = 1.5in,every arrow/.append style={dash}]
  {\bff \phi}_2 \arrow{r}{{\bf x}_2}\arrow{d}{{\bf y}}&  {\bff \phi}_{12}\arrow{d} \\
  {\bff \phi} \arrow{r}{{\bf x}}& {\bff \phi}_1\arrow{u}{{\bf y}_1}
\end{tikzcd}
\captionsetup{font=footnotesize}
\captionof*{figure}{(b) Compendious notation}
\end{minipage}\
\caption{Variables assigned on vertices and edges of  an elementary cell of the $\mathbb{Z}^2$ graph} \label{fig1}
\end{center}
\end{figure}
Let $\mathbb{A}$ be an  associative algebra over a field $\mathbb{F},$  with multiplicative identity that we denote with $1.$
Throughout this paper we consider $\mathcal{S}=\underbrace{\mathbb{A}^\times\times\cdots \times \mathbb{A}^\times}_{N-\text{times}},$ $N\in \mathbb{N},$ where $\mathbb{A}^\times$ denotes the subgroup of elements $w\in \mathbb{A}$ having multiplicative inverse $w^{-1}\in \mathbb{A},$ s.t. $w w^{-1}=w^{-1} w=1.$ In addition, with $C(\mathbb{A}^\times)$ we denote the center of algebra $\mathbb{A}^\times$ i.e. a commutative subalgebra of $\mathbb{A}^\times$ consisting of invertible elements.

In this general setting, $\mathbb{A}^\times$ could be a division ring for instance bi-quaternions.  More generally, $\mathbb{A}^\times$ could stand for the subgroup of invertible matrices of the algebra $\mathbb{A}$ of $n\times n$  matrices.

\subsection{The  Lax matrices $ L^{(N,j)},$ $M^{(N,j)},$ $j=1,\ldots,N-1$  $N\geq 2 \in \mathbb{N}$ and integrable hierarchies of difference systems}

The following Lax matrices of order N, have appeared in various occasions and in different context inside the theory of integrable systems, see for instance  \cite{Fordy:1980,Bog1,Nijhoff:1996,Kassotakis3:2020}. One of these Lax matrices, in particular the  Lax matrix that in what follows we denote as $ L^{(N,1)},$ was firstly considered in \cite{Nijhoff:1992} in connection with the lattice Gel'fand-Dikii hierarchy in the Abelian setting. After this seminal work, modifications of this Lax matrix led to various hierarchies of integrable difference systems in the Abelian \cite{Atkinson:2012,Fordy_2017}  and recently in the non-Abelian domain \cite{Doliwa_2013,Kassotakis:1:2021}.  The Lax matrix $ L^{(N,1)}$ explicitly reads

\begin{align}\label{Lax:eq11}
 L^{(N,1)}({\bf x};\lambda):= I_N+\nabla^{(1)} {\bf X}+\lambda \Delta^{(1)} {\bf X}=
\begin{pmatrix}
                      1&0&\cdots&0&\lambda\,x^{(N)}\\
                      x^{(1)}&1& 0& \cdots &0 \\
                      0  &x^{(2)}&\ddots& {}  &\vdots\\
                      \vdots & &\ddots &1 &0\\
                       0     & 0 &   &  x^{(N-1)}    &1
                     \end{pmatrix},
\end{align}
 and serves as a specific member (j=1) of the family of Lax matrices $ L^{(N,j)}({\bf x};\lambda):= I_N+\nabla^{(j)} {\bf X}+\lambda \Delta^{(j)} {\bf X},$ $j=1,\ldots, N-1,$
where $I_N$ the order $N$ identity matrix and   ${\bf X}$ an order $N$ diagonal matrix with entries $({\bf X})_{i,i}=x^{(i)}.$

  The compatibility conditions  $L^{(N,1)}({\bf u};\lambda)L^{(N,1)}({\bf y};\lambda)=L^{(N,1)}({\bf v};\lambda)L^{(N,1)}({\bf x};\lambda),$ read
  \begin{align*}
  u^{(i)}y^{(i-1)}=v^{(i)}x^{(i-1)},&&u^{(i)}+y^{(i)}=v^{(i)}+x^{(i)},&&i=1,2,\ldots, N,
  \end{align*}
  and they assure that the
   the Lax matrix $L^{(N,1)}$ serves as a strong Lax matrix for the hierarchy of maps
\begin{align} \label{map:G}
\mathcal{G}: \left(x^{(1)},\ldots, x^{(N)},y^{(1)},\ldots, y^{(N)}\right)\mapsto  \left(u^{(1)},\ldots, u^{(N)},v^{(1)},\ldots, v^{(N)}\right),
\end{align}
\begin{align*}
\begin{aligned}
u^{(i)}=\left(x^{(i)}-y^{(i)}\right)x^{(i-1)}\left(x^{(i-1)}-y^{(i-1)}\right)^{-1},\\
v^{(i)}=\left(y^{(i)}-x^{(i)}\right)y^{(i-1)}\left(y^{(i-1)}-x^{(i-1)}\right)^{-1},
\end{aligned}& & i=1,2,\ldots, N.
\end{align*}
This hierarchy in the non-Abelian setting was firstly considered in \cite{Kassotakis:1:2021}, and in vertex variables depending on a two-fold choice of potential functions, serves as  the lattice-modified or the lattice-Schwarzian Gel'fand-Dikii hierarchy. Moreover its companion hierarchy of maps, defines the hierarchy of $H^A_{III}$ Yang-Baxter maps.
It is easy to show that $\mathcal{G},$ has as symmetry the bijection
\begin{align*}
\psi: \left(x^{(1)},x^{(2)},\ldots, x^{(N)}\right)\mapsto  \left(x^{(N)},x^{(1)},\ldots, x^{(N-1)}\right).
\end{align*}

The family of Lax matrices $ M^{(N,j)}({\bf x};\lambda):= {\bf X}+\nabla^{(j)} +\lambda \Delta^{(j)},$ $j=1,\ldots,N-1,$ can be considered as a dual family to the family of Lax matrices $L^{(N,j)}.$   For the detailed study, in the Abelian case, of such family of Lax matrices we refer to \cite{Fordy_2017}. 
Here we consider one member of this family, namely, $ M^{(N,1)}$ that  explicitly reads:
\begin{align}\label{Lax:eq00}
 M^{(N,1)}({\bf x};\lambda):= {\bf X}+\nabla^{(1)} +\lambda \Delta^{(1)}=
\begin{pmatrix}
                      x^{(N)}&0&\cdots&0&\lambda\\
                      1&x^{(1)}& 0& \cdots &0 \\
                      0  &1&\ddots& {}  &\vdots\\
                      \vdots & &\ddots &x^{(N-2)} &0\\
                       0     & 0 &   &  1    &x^{(N-1)}
                     \end{pmatrix}.
\end{align}
 The compatibility conditions  $M^{(N,1)}({\bf u};\lambda)M^{(N,1)}({\bf y};\lambda)=M^{(N,1)}({\bf v};\lambda)M^{(N,1)}({\bf x};\lambda),$ read
  \begin{align*}
  u^{(i)}y^{(i)}=v^{(i)}x^{(i)},&&u^{(i)}+y^{(i-1)}=v^{(i)}+x^{(i-1)},&&i=1,2,\ldots, N,
  \end{align*}
  and they assure that    the Lax matrix $M^{(N,1)}$ serves as a strong Lax matrix for the hierarchy of maps
%
%
%
%
\begin{align} \label{map:D}
\mathcal{D}: \left(x^{(1)},\ldots, x^{(N)},y^{(1)},\ldots, y^{(N)}\right)\mapsto  \left(u^{(1)},\ldots, u^{(N)},v^{(1)},\ldots, v^{(N)}\right),
\end{align}
\begin{align*}
\begin{aligned}
u^{(i)}=\left(x^{(i-1)}-y^{(i-1)}\right)x^{(i)}\left(x^{(i)}-y^{(i)}\right)^{-1},\\ v^{(i)}=\left(y^{(i-1)}-x^{(i-1)}\right)y^{(i)}\left(y^{(i)}-x^{(i)}\right)^{-1},
\end{aligned}& & i=1,2,\ldots, N.
\end{align*}
The non-Abelian hierarchy of maps $\mathcal{D}$ was firstly considered in \cite{Doliwa_2013,Doliwa_2014},  where     it was also considered as a periodic reduction of the non-Abelian Hirota-Miwa system \cite{Nimmo:2006}. Furthermore, in \cite{Doliwa_2014} it was shown that the hierarchy of companion  maps of $\mathcal{D}$, defines the hierarchy of $H^B_{III}$ Yang-Baxter maps.  In vertex variables, $\mathcal{D}$ defines either the lattice-modified or the lattice-Schwarzian Gel'fand-Dikii hierarchy.

\section{Deformed  Lax matrices  and integrable hierarchies of difference systems} \label{Section:3}

Following Definition \ref{def_def_mat} and in particular Remark \ref{First_Remark}, here  we are searching for families of diagonal deformation matrices  for the Lax matrices $ L^{(N,1)}$ and  $M^{(N,1)},$ of the previous Section. In detail,  we search for diagonal deformation matrices $D^{(0)}({\bf x},{\boldsymbol \alpha},{\boldsymbol \beta}),$ under the additional requirements, first, they are diagonal matrices of order $N$, second, their entries read
\begin{align} \label{search_d}
\left(D^{(0)}({\bf x},{\boldsymbol \alpha},{\boldsymbol \beta})\right)_{i,i}=\left(\alpha^{(i-1)}-\beta^{(i-1)} x^{(i-1)}\right)^{-1},&&i=1,\ldots,N, && \alpha^{(i)},\beta^{(i)} \in C(\mathbb{A}^\times).
\end{align}
In order for  $D^{(0)}({\bf x},{\boldsymbol \alpha},{\boldsymbol \beta})$ to be a deformation matrix for the instance of  the strong Lax matrix $M^{(N,1)}$, a necessary condition (see Definition \ref{def_def_mat}) is that the deformed Lax  matrix ${\widehat M}^{(N,1)}({\bf x};\lambda):=D^{(0)}({\bf x},{\boldsymbol \alpha},{\boldsymbol \beta}) M^{(N,1)}({\bf x};\lambda)$  remains a strong Lax matrix, that is it defines uniquely a  family of maps. Clearly that is a strong requirement that will put some conditions on  the the deformation parameters. Indeed, the Lax equation $\widehat M^{(N,1)}({\bf u};\lambda)\widehat M^{(N,1)}({\bf y};\lambda)=\widehat M^{(N,1)}({\bf v};\lambda)\widehat M^{(N,1)}({\bf x};\lambda),$ provides the following three sets of equations, where each set consists of  $N$ equations,
\begin{align*}
\begin{aligned}
\left(\alpha^{(i)}-\beta^{(i)} u^{(i)}\right)^{-1} \left(\alpha^{(i-1)}-\beta^{(i-1)} y^{(i-1)}\right)^{-1}=\left(\alpha^{(i)}-\beta^{(i)} v^{(i)}\right)^{-1} \left(\alpha^{(i-1)}-\beta^{(i-1)}x^{(i-1)}\right)^{-1},
\end{aligned}\\
\begin{aligned}
\left(\alpha^{(i)}-\beta^{(i)} u^{(i)}\right)^{-1}u^{(i)} \left(\alpha^{(i)}-\beta^{(i)} y^{(i)}\right)^{-1}y^{(i)}=\left(\alpha^{(i)}-\beta^{(i)} v^{(i)}\right)^{-1}v^{(i)} \left(\alpha^{(i)}-\beta^{(i)} x^{(i)}\right)^{-1}x^{(i)},
\end{aligned}\\
\begin{aligned}
&\left(\alpha^{(i)}-\beta^{(i)} u^{(i)}\right)^{-1} \left(\alpha^{(i-1)}-\beta^{(i-1)} y^{(i-1)}\right)^{-1}y^{(i-1)}+\left(\alpha^{(i)}-\beta^{(i)} u^{(i)}\right)^{-1}u^{(i)} \left(\alpha^{(i)}-\beta^{(i)} y^{(i)}\right)^{-1}\\
&=\left(\alpha^{(i)}-\beta^{(i)} v^{(i)}\right)^{-1} \left(\alpha^{(i-1)}-\beta^{(i-1)} x^{(i-1)}\right)^{-1}x^{(i-1)}+\left(\alpha^{(i)}-\beta^{(i)} v^{(i)}\right)^{-1}v^{(i)} \left(\alpha^{(i)}-\beta^{(i)} x^{(i)}\right)^{-1},
\end{aligned}
\end{align*}
and the superscript $i=1,\ldots, N,$ is considered modulo $N$.  If we demand that the three sets of equations above are linearly dependent,  
we obtain that $\beta^{(1)}=\ldots=\beta^{(N)}=\beta,$  so the diagonal matrices $D^{(0)}({\bf x},{\boldsymbol \alpha},{\boldsymbol \beta}),$ with entries $\left(D^{(0)}({\bf x},{\boldsymbol \alpha},{\boldsymbol \beta})\right)_{i,i}=\left(\alpha^{(i)}-\beta x^{(i-1)}\right)^{-1}$ serves as a family of diagonal deformation matrices for the strong Lax matrix $M^{(N,1)}$ and the deformed Lax equation implies as unique solution the following family of hierarchies of maps
$$
\left(x^{(1)},\ldots, x^{(N)},y^{(1)},\ldots, y^{(N)}\right)\mapsto  \left(u^{(1)},\ldots, u^{(N)},v^{(1)},\ldots, v^{(N)}\right),
$$
where:
\begin{align} \label{SEC3:par_011}
\begin{aligned}
u^{(i)}=\left(\alpha^{(i-1)}-\beta y^{(i-1)}\right)^{-1}\left(x^{(i-1)}-y^{(i-1)}\right)x^{(i)}\left(x^{(i)}-y^{(i)}\right)^{-1}\left(\alpha^{(i)}-\beta y^{(i)}\right),\\
v^{(i)}=\left(\alpha^{(i-1)}-\beta x^{(i-1)}\right)^{-1}\left(y^{(i-1)}-x^{(i-1)}\right)y^{(i)}\left(y^{(i)}-x^{(i)}\right)^{-1}\left(\alpha^{(i)}-\beta x^{(i)}\right),\\
i=1,2,\ldots, N.
\end{aligned}& &
\end{align}
 Note that w.l.o.g. we could set $\alpha^{(i)},\beta\in\{0,1\},$ $i=1,\ldots, N.$ The generic case corresponds to $\alpha^{(1)}=\ldots=\alpha^{(N)}=\beta=1,$ while degenerate cases arise when we equalize to zero some of the  $\alpha^{(i)}.$ Note that in our setting,  the most degenerate case corresponds to set $\beta=0,$ and then the family of hierarchies above, coincides with (\ref{map:D}).

Working similarly, we find that the diagonal matrices $\widetilde D^{(0)}({\bf x},{\boldsymbol \alpha},{\boldsymbol \beta})$ with entries
 \begin{align*}
\left(\widetilde D^{(0)}({\bf x},{\boldsymbol \alpha},{\boldsymbol \beta})\right)_{i,i}=\left(\alpha-\beta^{(i-1)} x^{(i-1)}\right)^{-1},
\end{align*}
serves as a family of diagonal deformation matrices for the strong Lax matrix $L^{(N,1)}.$ We denote the deformed Lax matrix as ${\widehat L}^{(N,1)}({\bf x};\lambda):=\widetilde D^{(0)}({\bf x},{\boldsymbol \alpha},{\boldsymbol \beta}) L^{(N,1)}({\bf x};\lambda)$  and the deformed Lax equation implies as unique solution the following family of hierarchies of maps
$$
\left(x^{(1)},\ldots, x^{(N)},y^{(1)},\ldots, y^{(N)}\right)\mapsto  \left(u^{(1)},\ldots, u^{(N)},v^{(1)},\ldots, v^{(N)}\right),
$$
where:
\begin{align} \label{SEC3:par_012}
\begin{aligned}
u^{(i)}=\left(\alpha-\beta^{(i)} y^{(i)}\right)^{-1}\left(x^{(i)}-y^{(i)}\right)x^{(i-1)}\left(x^{(i-1)}-y^{(i-1)}\right)^{-1}\left(\alpha-\beta^{(i-1)} y^{(i-1)}\right),\\
v^{(i)}=\left(\alpha-\beta^{(i)} x^{(i)}\right)^{-1}\left(y^{(i)}-x^{(i)}\right)y^{(i-1)}\left(y^{(i-1)}-x^{(i-1)}\right)^{-1}\left(\alpha-\beta^{(i-1)} x^{(i-1)}\right),\\
i=1,2,\ldots, N.
\end{aligned}& &
\end{align}
The generic case corresponds to $\beta^{(1)}=\ldots=\beta^{(N)}=\alpha=1,$ while   the most degenerate case corresponds to set $\beta^{(1)}=\ldots=\beta^{(N)}=0,$ and then the family of hierarchies above, coincides with (\ref{map:G}).

Note that for the generic members of both families of hierarchies presented above  the deformation matrices coincide with the following matrix that we denote as $D({\bf x})$
\begin{align}\label{pema}
D({\bf x}):=\begin{pmatrix}
                      \left(1-x^{(N)}\right)^{-1}&0&\cdots&0&0\\
                      0&\left(1-x^{(1)}\right)^{-1}& 0& \cdots &0 \\
                      0  &0&\ddots& {}  &\vdots\\
                      \vdots & &\ddots &\left(1-x^{(N-2)}\right)^{-1} &0\\
                       0     & 0 &   &  0    &\left(1-x^{(N-1)}\right)^{-1}
                     \end{pmatrix}.
\end{align}
A detailed analysis on the generic members of both families of hierarchies presented above, is    presented in the  forthcoming Section.


\subsection{Two hierarchies of $3$D-compatible maps}
In the following  propositions, we  respectively present in detail the generic members of both families of hierarchies of $3D-$compatible maps that correspond to the deformed Lax matrices $\widehat M^{(N,1)}$ and $\widehat L^{(N,1)},$ of the previous Section.


Under identification (\ref{notation1}), $3$D-compatible maps correspond to integrable difference systems with variables defined on the edges of an elementary cell of the $\mathbb{Z}^2$ graph. In that respect, in this Section although we refer to hierarchies of $3$D-compatible maps, at the same time we refer to hierarchies of integrable difference systems with  edge variables.

\begin{prop}\label{prop:01}
The hierarchy of maps
\begin{align*}
\mathcal{K}^{(0)}: \left(x^{(1)},\ldots, x^{(N)},y^{(1)},\ldots, y^{(N)}\right)\mapsto  \left(u^{(1)},\ldots, u^{(N)},v^{(1)},\ldots, v^{(N)}\right),
\end{align*}
where
\begin{align} \label{sec3:01}
\begin{aligned}
u^{(i)}=\left(1-y^{(i-1)}\right)^{-1}\left(x^{(i-1)}-y^{(i-1)}\right)x^{(i)}\left(x^{(i)}-y^{(i)}\right)^{-1}\left(1-y^{(i)}\right),\\
v^{(i)}=\left(1-x^{(i-1)}\right)^{-1}\left(y^{(i-1)}-x^{(i-1)}\right)y^{(i)}\left(y^{(i)}-x^{(i)}\right)^{-1}\left(1-x^{(i)}\right),\\
i=1,2,\ldots, N,
\end{aligned}
\end{align}
\begin{enumerate}
\item has as strong Lax matrix the matrix $\widehat M^{(N,1)}({\bf x};\lambda):=D({\bf x})\, M^{(N,1)}\,({\bf x};\lambda),$ where $M^{(N,1)}({\bf x};\lambda)$  the Lax matrix (\ref{Lax:eq00}) and
 $D({\bf x})$ the deformation matrix (\ref{pema});
\item has as symmetry the bijection
\begin{align*}
\psi: \left(x^{(1)},x^{(2)},\ldots, x^{(N)}\right)\mapsto  \left(x^{(N)},x^{(1)},\ldots, x^{(N-1)}\right).
\end{align*}
\end{enumerate}
\end{prop}

\begin{proof}
Explicitly the Lax matrix $\widehat M^{(N,1)}({\bf x};\lambda)$ reads
\begin{align*}
\widehat M^{(N,1)}({\bf x};\lambda):= 
{\scriptsize \begin{pmatrix}
                      \left(1-x^{(N)}\right)^{-1}x^{(N)}&0&\cdots&0&\lambda\,\left(1-x^{(N)}\right)^{-1}\\
                      \left(1-x^{(1)}\right)^{-1}&\left(1-x^{(1)}\right)^{-1}x^{(1)}& 0& \cdots &0 \\
                      0  &\left(1-x^{(2)}\right)^{-1}&\ddots& {}  &\vdots\\
                      \vdots & &\ddots &\left(1-x^{(N-2)}\right)^{-1}x^{(N-2)} &0\\
                       0     & 0 &   &  \left(1-x^{(N-1)}\right)^{-1}    &\left(1-x^{(N-1)}\right)^{-1}x^{(N-1)}
                     \end{pmatrix}.}
\end{align*}
The compatibility conditions $\widehat M^{(N,1)}({\bf u};\lambda)\widehat M^{(N,1)}({\bf y};\lambda)=\widehat M^{(N,1)}({\bf v};\lambda)\widehat M^{(N,1)}({\bf x};\lambda)$ read:
\begin{align}
\begin{aligned}\label{sec3:02}
\left(1-u^{(i)}\right)^{-1} \left(1-y^{(i-1)}\right)^{-1}=\left(1-v^{(i)}\right)^{-1} \left(1-x^{(i-1)}\right)^{-1},
\end{aligned}\\
\begin{aligned}\label{sec3:03}
\left(1-u^{(i)}\right)^{-1}u^{(i)} \left(1-y^{(i)}\right)^{-1}y^{(i)}=\left(1-v^{(i)}\right)^{-1}v^{(i)} \left(1-x^{(i)}\right)^{-1}x^{(i)},
\end{aligned}\\
\begin{aligned}\label{sec3:04}
&\left(1-u^{(i)}\right)^{-1} \left(1-y^{(i-1)}\right)^{-1}y^{(i-1)}+\left(1-u^{(i)}\right)^{-1}u^{(i)} \left(1-y^{(i)}\right)^{-1}\\
&=\left(1-v^{(i)}\right)^{-1} \left(1-x^{(i-1)}\right)^{-1}x^{(i-1)}+\left(1-v^{(i)}\right)^{-1}v^{(i)} \left(1-x^{(i)}\right)^{-1}
\end{aligned}
\end{align}
where the superscript $i=1,\ldots, N,$ is considered modulo $N$. The three sets of compatibility conditions presented above are not functionally independent since it can be easily shown that equations (\ref{sec3:04}) can be obtained by adding (\ref{sec3:02}) and (\ref{sec3:03}). Then it is easy to verify that the solved form of (\ref{sec3:02}),(\ref{sec3:03}), is exactly (\ref{sec3:01}). In addition, the compatibility conditions (\ref{sec3:02})-(\ref{sec3:04}) are clearly invariant under the map $\phi:=\psi\times \psi,$ that proves that $\psi$ is a symmetry of $\mathcal{K}^{(0)}.$
\end{proof}

\begin{prop}\label{prop:02}
The hierarchy of maps
\begin{align*}
\mathcal{K}^{(1)}: \left(x^{(1)},\ldots, x^{(N)},y^{(1)},\ldots, y^{(N)}\right)\mapsto  \left(u^{(1)},\ldots, u^{(N)},v^{(1)},\ldots, v^{(N)}\right),
\end{align*}
where
\begin{align} \label{sec3:1}
\begin{aligned}
u^{(i)}=\left(1-y^{(i)}\right)^{-1}\left(x^{(i)}-y^{(i)}\right)x^{(i-1)}\left(x^{(i-1)}-y^{(i-1)}\right)^{-1}\left(1-y^{(i-1)}\right),\\
v^{(i)}=\left(1-x^{(i)}\right)^{-1}\left(y^{(i)}-x^{(i)}\right)y^{(i-1)}\left(y^{(i-1)}-x^{(i-1)}\right)^{-1}\left(1-x^{(i-1)}\right),\\
i=1,2,\ldots, N,
\end{aligned}
\end{align}
\begin{enumerate}
\item has as strong Lax matrix the matrix $\widehat L^{(N,1)}({\bf x};\lambda):=D({\bf x})\, L^{(N,1)}\,({\bf x};\lambda),$ where $L^{(N,1)}({\bf x};\lambda)$  the Lax matrix (\ref{Lax:eq11}) and
 $D({\bf x})$ the deformation matrix (\ref{pema});
\item has as symmetry the bijection
\begin{align*}
\psi: \left(x^{(1)},x^{(2)},\ldots, x^{(N)}\right)\mapsto  \left(x^{(N)},x^{(1)},\ldots, x^{(N-1)}\right).
\end{align*}
\item The hierarchies $\mathcal{K}^{(0)}$ and $\mathcal{K}^{(1)}$ are not equivalent.
\end{enumerate}
\end{prop}

\begin{proof}
The compatibility conditions $\widehat L^{N,1}({\bf u};\lambda)\widehat L^{N,1}({\bf y};\lambda)=\widehat L^{N,1}({\bf v};\lambda)\widehat L^{N,1}({\bf x};\lambda)$ explicitly read:
\begin{align}
\begin{aligned}\label{sec3:2}
\left(1-u^{(i)}\right)^{-1} \left(1-y^{(i)}\right)^{-1}=\left(1-v^{(i)}\right)^{-1} \left(1-x^{(i)}\right)^{-1},
\end{aligned}\\
\begin{aligned}\label{sec3:3}
\left(1-u^{(i)}\right)^{-1}u^{(i)} \left(1-y^{(i-1)}\right)^{-1}y^{(i-1)}=\left(1-v^{(i)}\right)^{-1}v^{(i)} \left(1-x^{(i-1)}\right)^{-1}x^{(i-1)},
\end{aligned}\\
\begin{aligned}\label{sec3:4}
&\left(1-u^{(i)}\right)^{-1} \left(1-y^{(i)}\right)^{-1}y^{(i)}+\left(1-u^{(i)}\right)^{-1}u^{(i)} \left(1-y^{(i-1)}\right)^{-1}\\
&=\left(1-v^{(i)}\right)^{-1} \left(1-x^{(i)}\right)^{-1}x^{(i)}+\left(1-v^{(i)}\right)^{-1}v^{(i)} \left(1-x^{(i-1)}\right)^{-1}
\end{aligned}
\end{align}
where the superscript $i=1,\ldots, N,$ is considered modulo $N$. Similarly with the Proposition \ref{prop:01} the three sets of compatibility conditions presented above are not functionally independent since it can be easily shown that equations (\ref{sec3:4}) can be obtained by adding (\ref{sec3:2}) and (\ref{sec3:3}). Then it is easy to verify that the solved form of (\ref{sec3:2}),(\ref{sec3:3}), is exactly (\ref{sec3:1}). Furthermore, the compatibility conditions (\ref{sec3:2})-(\ref{sec3:4}) are clearly invariant under the map $\phi:=\psi\times \psi,$ that proves that $\psi$ is a symmetry of $\mathcal{K}^{(1)}.$
Finally,  the hierarchies $\mathcal{K}^{(1)}$ and $\mathcal{K}^{(0)}$ are related via
the change of variables
$\left(u^{(i)},x^{(i)},v^{(i)},y^{(i)}\right)\mapsto \left(u^{(i)},x^{(i+1)},v^{(i)},y^{(i+1)}\right),$ $\forall i\in\{1,\ldots, N\}.$  Indeed, under this change of variables,  $\mathcal{K}^{(1)}$ reads
\begin{align*}
\begin{aligned}
u^{(i)}=\left(1-y^{(i+1)}\right)^{-1}\left(x^{(i+1)}-y^{(i+1)}\right)x^{(i)}\left(x^{(i)}-y^{(i)}\right)^{-1}\left(1-y^{(i)}\right),\\
v^{(i)}=\left(1-x^{(i+1)}\right)^{-1}\left(y^{(i+1)}-x^{(i+1)}\right)y^{(i)}\left(y^{(i)}-x^{(i)}\right)^{-1}\left(1-x^{(i)}\right),\\
 i=1,2,\ldots, N,
 \end{aligned}
\end{align*}
that coincides with the {\em opposite} of the inverse hierarchy of maps $\mathcal{K}^{(0)}.$ Note that the {\em opposite} $H^{opp}$ of a non-Abelian map $H,$ is defined as the map that is obtained from the map $H$ when all multiplications are taken in opposite order.

Note also that there does not exists any bijection $\kappa$ such that  $\mathcal{K}^{(1)}=(\kappa^{-1}\times \kappa^{-1})\circ \mathcal{K}^{(0)} \circ (\kappa\times \kappa),$ so $\mathcal{K}^{(0)}$ and $\mathcal{K}^{(1)}$
are not equivalent up to the equivalence relation of Definition  \ref{def:1:1}.
\end{proof}

\subsubsection{Multidimensional compatibility of the hierarchies $\mathcal{K}^{(i)},$  $i=0,1$}

In this Section we prove that both hierarchies $\mathcal{K}^{(i)},$  $i=0,1$  are multidimensional compatible. Actually we prove the multidimensional compatibility of the $\mathcal{K}^{(1)}$ hierarchy by providing  explicitly its multidimensional compatibility formula. The multidimensional compatibility of  $\mathcal{K}^{(0)}$  can be proven  in an exactly similar manner.

By performing the identifications
\begin{align} \label{identif:2}
X^{i,a}:=x^{(i)},&& X_b^{i,a}:=u^{(i)}, &&X^{i,b}:=y^{(i)}, &&X_a^{i,b}:=v^{(i)},
\end{align}
 the hierarchy of maps (\ref{sec3:1}) obtains the compact form
\begin{align} \label{sec3:14}
X^{i,a}_b=\left(1-X^{i,b}\right)^{-1}\left(X^{i,a}-X^{i,b}\right) X^{i-1,a} \left(X^{i-1,a}-X^{i-1,b}\right)^{-1} \left(1-X^{i-1,b}\right),
\end{align}
$i=1,\ldots,N,$ $a\neq b\in\{1,2\}.$
   At the same time, (\ref{sec3:14}) serves as a difference system with variables assigned on the edges of an elementary cell of the $\mathbb{Z}^2$ graph, where the subscripts denote appropriate discrete shifts (see Figure \ref{fig1} under the identification (\ref{identif:2})).
\begin{lemma}\label{lemma1}
It holds
 \begin{align}  \label{Lax:lemma1:eq1}
1-X^{i,b}_c=\left(1-X^{i,c}\right)^{-1}\left(X^{i,b}-X^{i,c}\right)\Gamma^{(i)}(b,c),\\ \label{Lax:lemma1:eq2}
X^{i,a}_c-X^{i,b}_c=\begin{multlined}[t]
\left(1-X^{i,c}\right)^{-1} \left(X^{i,b}-X^{i,c}\right) \Delta^{(i)}(a,b,c) \left(X^{i-1,a}-X^{i-1,c}\right)^{-1} \\
\left(1-X^{i-1,c}\right),
\end{multlined}
 \end{align}
$i=1,\ldots,N,\; a\neq b \neq c \neq a\in\{1,\ldots,n\},$ where
 \begin{align*}
\Gamma^{(i)}(b,c):=\left(X^{i,b}-X^{i,c}\right)^{-1}\left(1-X^{i,c}\right)-X^{i-1,b}\left(X^{i-1,b}-X^{i-1,c}\right)^{-1}\left(1-X^{i-1,c}\right),\\
\Delta^{(i)}(a,b,c):=\begin{multlined}[t]
\left(X^{i,b}-X^{i,c}\right)^{-1}\left(X^{i,a}-X^{i,c}\right)X^{i-1,a}-X^{i-1,b}\left(X^{i-1,b}-X^{i-1,c}\right)^{-1}\\
\left(X^{i-1,a}-X^{i-1,c}\right).
\end{multlined}
 \end{align*}
The functions $\Gamma^{(i)}(b,c)$ and $\Delta^{(i)}(a,b,c)$ are antisymmetric under the interchange $b\leftrightarrow c$  i.e.
 $\Gamma^{(i)}(b,c)+\Gamma^{(i)}(c,b)=0,$   $\Delta^{(i)}(a,b,c)+\Delta^{(i)}(a,c,b)=0$.
\end{lemma}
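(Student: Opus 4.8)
The plan is to prove Lemma~\ref{lemma1} by direct substitution of the map~(\ref{sec3:14}) into the left-hand sides of~(\ref{Lax:lemma1:eq1}) and~(\ref{Lax:lemma1:eq2}), and then to read off the definitions of $\Gamma^i_{bc}$ and $\Delta^i_{bc}$ from the resulting expressions. For~(\ref{Lax:lemma1:eq1}) I would start from $X^{i,b}_c$, which by~(\ref{sec3:14}) with the roles $a\to b$, $b\to c$ equals $\left(1-X^{i,c}\right)^{-1}\left(X^{i,b}-X^{i,c}\right) X^{i-1,b} \left(X^{i-1,b}-X^{i-1,c}\right)^{-1} \left(1-X^{i-1,c}\right)$. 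Then I compute
\begin{align*}
1-X^{i,b}_c=\left(1-X^{i,c}\right)^{-1}\Bigl[\left(1-X^{i,c}\right)-\left(X^{i,b}-X^{i,c}\right) X^{i-1,b} \left(X^{i-1,b}-X^{i-1,c}\right)^{-1} \left(1-X^{i-1,c}\right)\Bigr],
\end{align*}
and the key manipulation is to factor out $\left(X^{i,b}-X^{i,c}\right)$ on the left inside the bracket. Since $\left(1-X^{i,c}\right)=\left(X^{i,b}-X^{i,c}\right)\left(X^{i,b}-X^{i,c}\right)^{-1}\left(1-X^{i,c}\right)$, the bracket becomes exactly $\left(X^{i,b}-X^{i,c}\right)\Gamma^i_{bc}$ with $\Gamma^i_{bc}$ as defined; this yields~(\ref{Lax:lemma1:eq1}) directly.

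For~(\ref{Lax:lemma1:eq2}) I would similarly write out both $X^{i,a}_c$ and $X^{i,b}_c$ using~(\ref{sec3:14}), take their difference, and factor out the common left factor $\left(1-X^{i,c}\right)^{-1}\left(X^{i,b}-X^{i,c}\right)$ and the common right factor $\left(X^{i-1,a}-X^{i-1,c}\right)^{-1}\left(1-X^{i-1,c}\right)$. The term sitting between them is precisely $\Delta^i_{bc}$. The only subtlety here is the non-commutativity of $\mathbb{A}$: the factorizations must respect left/right placement throughout, so I would be careful that each inserted identity (such as rewriting $X^{i,a}-X^{i,c}$ through $\left(X^{i,b}-X^{i,c}\right)\left(X^{i,b}-X^{i,c}\right)^{-1}$) is applied on the correct side. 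This bookkeeping—not any deep idea—is where the main care is needed, and it is the step I expect to be the principal obstacle.

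Finally, I would verify the antisymmetry assertions $\Gamma^i_{bc}+\Gamma^i_{cb}=0$ and $\Delta^i_{bc}+\Delta^i_{cb}=0$. For $\Gamma$, interchanging $b\leftrightarrow c$ in its definition and adding, I note that $\left(X^{i,b}-X^{i,c}\right)^{-1}\left(1-X^{i,c}\right)+\left(X^{i,c}-X^{i,b}\right)^{-1}\left(1-X^{i,b}\right)$ should be checked to cancel against the analogous pair in the $X^{i-1}$ variables; the natural identity to invoke is $\left(X-Y\right)^{-1}\left(1-Y\right)-\left(X-Y\right)^{-1}\left(1-X\right)=\left(X-Y\right)^{-1}\left(X-Y\right)=1$, together with its $b\leftrightarrow c$ reflection, so each level contributes a cancelling constant and the two levels are subtracted identically. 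The antisymmetry of $\Delta$ follows by the same mechanism once the $a$-dependent factors are tracked. I expect both antisymmetry checks to reduce to a single telescoping identity of the form $\left(X-Y\right)^{-1}\left(Z-Y\right)+\left(Y-X\right)^{-1}\left(Z-X\right)=1$, applied once at level $i$ and once at level $i-1$, so that the difference of the two levels vanishes.
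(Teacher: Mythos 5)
Your proposal takes essentially the same route as the paper: the paper also proves (\ref{Lax:lemma1:eq1}) and (\ref{Lax:lemma1:eq2}) by substituting the expressions for $X^{i,a}_c$, $X^{i,b}_c$ from (\ref{sec3:14}) and recollecting terms, and your factorizations (inserting $\left(X^{i,b}-X^{i,c}\right)\left(X^{i,b}-X^{i,c}\right)^{-1}$ on the correct side and reading off $\Gamma^i_{bc}$, $\Delta^i_{bc}$ as the middle factors) are correct as written. The one place where your sketch is imprecise is the antisymmetry step. At level $i$ your telescoping identity $\left(X-Y\right)^{-1}\left(Z-Y\right)+\left(Y-X\right)^{-1}\left(Z-X\right)=1$ applies verbatim (with $Z=1$ for $\Gamma$ and $Z=X^{i,a}$ for $\Delta$, followed in the latter case by right multiplication by $X^{i-1,a}$). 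But the level-$(i-1)$ summands are of the form $B\left(B-C\right)^{-1}\left(Z-C\right)+C\left(C-B\right)^{-1}\left(Z-B\right)$ with $B=X^{i-1,b}$, $C=X^{i-1,c}$, i.e.\ they carry the variables as \emph{left} factors, and in the non-commutative setting your identity does not cover them as stated; you need its left-multiplied companion $B\left(B-C\right)^{-1}\left(Z-C\right)+C\left(C-B\right)^{-1}\left(Z-B\right)=Z$, which follows from $B\left(B-C\right)^{-1}=1+C\left(B-C\right)^{-1}$ and the same telescoping. This is precisely the role played by the identity $\left(1-AB^{-1}\right)^{-1}+\left(1-BA^{-1}\right)^{-1}=1$ (equivalently $B\left(B-C\right)^{-1}+C\left(C-B\right)^{-1}=1$) that the paper invokes. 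Note also that for $\Delta^i_{bc}$ the cancelling quantity is $X^{i-1,a}$, not $1$: each level sums to $X^{i-1,a}$ and their difference vanishes. These are one-line fixes, so your argument is sound, but a verbatim execution of the level-$(i-1)$ step as you stated it would not go through without them.
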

\begin{proof}
Substituting from (\ref{sec3:14}) the expressions of $X^{i,a}_c$ and $X^{i,b}_c$ into the lhs of (\ref{Lax:lemma1:eq1}) and (\ref{Lax:lemma1:eq2}),  upon expansion, recollection of terms  we verify these formulae.

Using the identity
\begin{align*}
\left(1-A B^{-1}\right)^{-1}+\left(1-B A^{-1}\right)^{-1}=1,
\end{align*}
where $A,B$  non-commuting symbols, we can show that  $\Gamma^{(i)}(b,c)+\Gamma^{(i)}(c,b)=0,$   $\Delta^{(i)}(a,b,c)+\Delta{(i)}(a,c,b)=0$ and that proves the fact that the functions $\Gamma^{(i)}(b,c)$ and $\Delta^{(i)}(a,b,c),$ are antisymmetric under the interchange $b\leftrightarrow c$ of the discrete shifts.
\end{proof}
\begin{prop} \label{prop:3.4}
The hierarchy of difference equations (\ref{sec3:14}) can be extended in a compatible way to $n-$dimensions as follows
 \begin{align} \label{Lax:eq5}
X^{i,a}_b=\left(1-X^{i,b}\right)^{-1}\left(X^{i,a}-X^{i,b}\right) X^{i-1,a} \left(X^{i-1,a}-X^{i-1,b}\right)^{-1} \left(1-X^{i-1,b}\right),
\end{align}
with $i=1,\ldots,N,\; a\neq b\in\{1,\ldots,n\}$. 
   The compatibility conditions
 \begin{align*}
 X^{i,a}_{bc}=X^{i,a}_{cb},\quad i=1,\ldots,N,\;\; a\neq b \neq c \neq a\in\{1,\ldots,n\}
 \end{align*}
  hold.
\end{prop}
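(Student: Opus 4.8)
The plan is to verify the compatibility by computing $X^{i,a}_{bc}$ directly from the extended map (\ref{Lax:eq5}) and showing that the result is symmetric under $b\leftrightarrow c$. First I would apply the $c$-shift to the defining relation for $X^{i,a}_b$. Because a lattice shift distributes over a rational expression in the fields, $X^{i,a}_{bc}$ is obtained from the right-hand side of (\ref{Lax:eq5}) by replacing every field by its $c$-shift, namely
\[
X^{i,a}_{bc}=\left(1-X^{i,b}_c\right)^{-1}\left(X^{i,a}_c-X^{i,b}_c\right)X^{i-1,a}_c\left(X^{i-1,a}_c-X^{i-1,b}_c\right)^{-1}\left(1-X^{i-1,b}_c\right).
\]
The point is that Lemma \ref{lemma1} already supplies closed expressions for exactly the $c$-shifted building blocks appearing in this product.

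Next I would substitute Lemma \ref{lemma1} into this product: relation (\ref{Lax:lemma1:eq1}) for $1-X^{i,b}_c$ and for $1-X^{i-1,b}_c$ (the latter with $i\to i-1$), relation (\ref{Lax:lemma1:eq2}) for $X^{i,a}_c-X^{i,b}_c$ and for $X^{i-1,a}_c-X^{i-1,b}_c$, and the map (\ref{Lax:eq5}) itself for the middle factor $X^{i-1,a}_c$. Reading the product from left to right, adjacent factors of the form $(1-X^{i,c})(1-X^{i,c})^{-1}$ and $(X^{i,b}-X^{i,c})^{-1}(X^{i,b}-X^{i,c})$ collapse to the identity; the same telescoping then recurs at level $i-1$ and at level $i-2$. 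After these cancellations the long noncommutative product reduces to the compact form
\[
X^{i,a}_{bc}=\left(\Gamma^i_{bc}\right)^{-1}\Delta^i_{bc}\,X^{i-2,a}\left(\Delta^{i-1}_{bc}\right)^{-1}\Gamma^{i-1}_{bc}.
\]

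Finally, the symmetry is immediate from the antisymmetry established in Lemma \ref{lemma1}: under $b\leftrightarrow c$ each of $\Gamma^i_{bc}$, $\Delta^i_{bc}$, $\Delta^{i-1}_{bc}$ and $\Gamma^{i-1}_{bc}$ changes sign, and since $-1\in\mathbb{F}$ is central one has $(-A)^{-1}=-A^{-1}$ for the two inverted factors. Hence the right-hand side acquires exactly four factors of $-1$, which multiply to $+1$, so it is invariant; that is $X^{i,a}_{bc}=X^{i,a}_{cb}$. Since the compact form holds for every $i=1,\ldots,N$ (read modulo $N$) and every $a\neq b\neq c\neq a\in\{1,\ldots,n\}$, this establishes the claimed $n$-dimensional compatibility.

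The main obstacle is purely organizational: keeping the noncommutative factors in their correct right-to-left order through the substitutions, since the cancellations depend on adjacent inverse pairs actually meeting inside the product. The decisive observation is that Lemma \ref{lemma1} is engineered precisely so that the $\left(1-X^{i,c}\right)$ and $\left(X^{i,b}-X^{i,c}\right)$ envelopes wrapping each $c$-shifted quantity telescope against their neighbours, collapsing what is a priori a long alternating product down to the four $\Gamma,\Delta$ factors whose parity under $b\leftrightarrow c$ then does all the work.
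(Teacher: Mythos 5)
Your proof is correct and follows essentially the same route as the paper: shift (\ref{Lax:eq5}) in the $c$-direction, substitute the closed expressions of Lemma \ref{lemma1} for the $c$-shifted quantities, telescope the adjacent inverse pairs, and conclude the symmetry under $b\leftrightarrow c$ from the product of an even number (four) of antisymmetric $\Gamma,\Delta$ factors. One discrepancy is worth flagging: your compact formula has $X^{i-2,a}$ as the middle factor, whereas the paper's displayed formula has $X^{i-1,a}$; carrying out the cancellations carefully (the middle factor $X^{i-1,a}_c$ is replaced via (\ref{Lax:eq5}) and contributes $X^{i-2,a}$ after its envelopes cancel against the neighbouring factors, which a direct numerical check in the Abelian case confirms), your version is the correct one and the paper's appears to contain a typo — and since the middle factor is independent of $b$ and $c$ in either case, the symmetry argument and the conclusion are unaffected.
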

\begin{proof}
Shifting relations (\ref{Lax:eq5}) at the $c-$direction, we obtain
 \begin{align*}
X^{i,a}_{bc}=\left(1-X^{i,b}_c\right)^{-1}\left(X^{i,a}_c-X^{i,b}_c\right) X^{i-1,a}_c \left(X^{i-1,a}_c-X^{i-1,b}_c\right)^{-1} \left(1-X^{i-1,b}_c\right).
 \end{align*}
Substituting from (\ref{Lax:eq5}) the expression of $X^{i-1,a}_c,$ $X^{i-1,b}_c,$ $X^{i,a}_c$ and $X^{i,b}_c$   to the relations above
and by making use of Lemma (\ref{lemma1}) we obtain the following multidimensional
 compatibility formula
\begin{align*}
 X^{i,a}_{bc}=\left(\Gamma^{(i)}(b,c)\right)^{-1}\Delta^{(i)}(a,b,c)X^{i-2,a}\left(\Delta^{(i-1)}(a,b,c)\right)^{-1}\Gamma^{(i-1)}(b,c),
 \end{align*}
that is clearly symmetric under the interchange  $b$ to $c,$ since it consists of the product of an even number of antisymmetric functions (under the interchange $b\leftrightarrow c$) and that completes the proof.
\end{proof}
\begin{remark}
In exactly similar manner we can prove that the hierarchy of difference equations (\ref{sec3:01}) can be extended in a compatible way to $n-$dimensions as follows
 \begin{align} \label{K-hi}
X^{i,a}_b=\left(1-X^{i-1,b}\right)^{-1}\left(X^{i-1,a}-X^{i-1,b}\right) X^{i,a} \left(X^{i,a}-X^{i,b}\right)^{-1} \left(1-X^{i,b}\right),
\end{align}
with $i=1,\ldots,N,\; a\neq b\in\{1,\ldots,n\}$. If we consider $i\in\mathbb{Z},$ (\ref{K-hi}) remains multidimensional compatible and moreover it serves as a deformation of the Hirota-Miwa system. Indeed by setting $X^{k,d}=\epsilon X^{k,d},$ $k\in \mathbb{Z},$ $d\in\{1,\ldots,n\},$ $\epsilon \in C(\mathbb{A}^\times),$ and by taking the limit $\epsilon\rightarrow 0,$  we obtain exactly the  non-Abelian discrete KP hierarchy, which was derived in \cite{Doliwa_2014}, from the non-Abelian Hirota-Miwa system \cite{Nimmo:2006}. In that respect and for $i \in \mathbb{Z},$    (\ref{K-hi}) serves as an integrable  deformation of the discrete KP hierarchy.

\end{remark}

\subsubsection{Quadrirationality of the hierarchies $\mathcal{K}^{(i)}$  $i=0,1$}

In what follows, we consider an analysis of proving quadrirationality of the $\mathcal{K}^{(1)}$ hierarchy under some commutativity assumptions that are referred to as the {\em centrality assumptions}.

\begin{remark} \label{lemma:3.20}
The products $\mathcal{P}^{(N,k)}({\bf x}):=\prod_{l=1}^N x^{(N+k-l)},$ 
$k=1,2,\ldots,N$ satisfy the relations:
\begin{align} \label{Lax:pr1}
\mathcal{P}^{(N,k)}({\bf u})=\begin{multlined}[t]
\left(1-y^{(N+k-1)}\right)^{-1}\left(x^{(N+k-1)}-y^{(N+k-1)}\right)\mathcal{P}^{N,k-1}({\bf x})\\
\left(x^{(N+k-1)}-y^{(N+k-1)}\right)^{-1}\left(1-y^{(N+k-1)}\right),
\end{multlined}\\ \label{Lax:pr2}
\mathcal{P}^{(N,k)}({\bf v})=\begin{multlined}[t]
\left(1-x^{(N+k-1)}\right)^{-1}\left(y^{(N+k-1)}-x^{(N+k-1)}\right)\mathcal{P}^{N,k-1}({\bf y})\\
\left(y^{(N+k-1)}-x^{(N+k-1)}\right)\left(1-x^{(N+k-1)}\right),
\end{multlined}
 \end{align}
$k=1,\ldots,N,$ and  ${\bf u},{\bf v}$ stand for  the defining relations  of the $\mathcal{K}^{(1)}$ hierarchy (\ref{sec3:1}). Note that the direct substitution of (\ref{sec3:1}) into (\ref{Lax:pr1}),(\ref{Lax:pr2}) validates the formulas above.  Also note that  we consider the products presented above in ascending order of the superscripts f.i. $\prod_{l=1}^m x^{(l)}:=x^{(1)}x^{(2)}\ldots x^{(m)}.$
\end{remark}

Clearly in the Abelian case the products $\mathcal{P}^{(N,k)}$ are independent of $k$ i.e. all products $\mathcal{P}^{(N,k)}$ coincide with the product   $\mathcal{P}^{(N)}({\bf x})=\prod_{l=1}^N x^{(l)}.$ Furthermore,  there is $\mathcal{P}^{(N)}({\bf u})=\mathcal{P}^{(N)}({\bf x}),$ and $\mathcal{P}^{(N)}({\bf v})=\mathcal{P}^{(N)}({\bf y}),$ hence the products $\mathcal{P}^{(N)}({\bf x})$ and $\mathcal{P}^{(N)}({\bf y})$ serve as invariants of the map $\mathcal{K}^{(1)}.$ In the non-Abelian case, these facts mentioned earlier are no longer true. Due to  Remark \ref{lemma:3.20}, the products $\mathcal{P}^{(N,k)}$ are no longer independent of $k$ and moreover are not invariants of the map. Nevertheless,  if we assume that the products $\mathcal{P}^{(N,k)}$ belong to the center of the algebra $\mathbb{A}^\times,$  from  Remark \ref{lemma:3.20} we obtain the invariant relations $\mathcal{P}^{(N,k)}({\bf u})=\mathcal{P}^{(N,k-1)}({\bf x}),$ and
$\mathcal{P}^{(N,k)}({\bf v})=\mathcal{P}^{(N,k-1)}({\bf y}),$ $k=1,\ldots,N,$ hence the functions  $\mathcal{P}^{(N,k)}({\bf x}),\mathcal{P}^{(N,k)}({\bf y})$ are covariants of the map. We denote the functions $\mathcal{P}^{(N,k)}({\bf x}),\mathcal{P}^{(N,k)}({\bf y}),$ respectively as $\mathrm{p}^{(k-1)}$ and $\mathrm{q}^{(k-1)}$. To recapitulate we have
\begin{align} \label{sec3:20}
\mathrm{p}^{(k-1)}:=\mathcal{P}^{(N,k)}({\bf x}), && \mathrm{q}^{(k-1)}:=\mathcal{P}^{(N,k)}({\bf y}), && i=1,\ldots,N.
\end{align}
In addition, as a consequence of (\ref{sec3:20})  it holds that $\mathrm{p}^{(i)}=\mathrm{p}^{(j)},$ and  $\mathrm{q}^{(i)}=\mathrm{q}^{(j)},$   $\forall i,j\in \{0,1,\ldots,N-1\}.$

  From further on, when we refer to the centrality assumption, we refer to the formulas:
\begin{align} \label{sec3:5}
x^{(N)}\cdots x^{(2)} x^{(1)}=p\in C(\mathbb{A}^\times), && y^{(N)}\cdots y^{(2)} y^{(1)}=q\in C(\mathbb{A}^\times),
\end{align}
where with $C(\mathbb{\mathbb{A}^\times})$ we denote the center of the algebra $\mathbb{A}^\times$ and we also have denoted $p:=\mathrm{p}^{(0)},$ $q:=\mathrm{q}^{(0)}.$
The centrality assumptions were first introduced in \cite{Doliwa_2013,Doliwa_2014} for the so-called $N-$periodic reduction of the non-Abelian Hirota-Miwa system (KP-map). Centrality assumptions play  a  crucial role to the quadrirationality of the hierarchy of  maps (\ref{sec3:1}), as it is shown in the Proposition that follows.

\begin{prop}\label{prop1.2}
The hierarchy of maps (\ref{sec3:1}) is birational.  When the centrality assumptions (\ref{sec3:5}) are imposed the hierarchy of maps is quadrirational.
\end{prop}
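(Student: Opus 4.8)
The plan is to establish the two assertions separately: birationality of the direct map $R:=\mathcal{K}^{(1)}$ holds unconditionally, while quadrirationality requires the centrality assumptions (\ref{sec3:5}) in order to make the companion map birational. For the direct map I would invert it using the Lax compatibility relations (\ref{sec3:2}) and (\ref{sec3:3}) rather than the closed formulas (\ref{sec3:1}); since the solved form of (\ref{sec3:2})--(\ref{sec3:3}) is precisely (\ref{sec3:1}), this is an equivalent description of $R$ and is far better suited to backward solving. For quadrirationality I would then analyse the companion map $R^c:(\mathbf x,\mathbf v)\mapsto(\mathbf u,\mathbf y)$, reducing it to a periodic non-Abelian Riccati recursion whose consistency is controlled by the central products of Corollary \ref{lemma:3.20}.

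For the first part, rewrite (\ref{sec3:2}) in the inverted form $(1-y^i)(1-u^i)=(1-x^i)(1-v^i)$, which is purely local in the index $i$ and expresses $x^i$ rationally through $y^i$ and the known data $(u^i,v^i)$. Introducing the abbreviation $\tilde w := (1-w)^{-1}w$ and substituting the resulting expression for $x^{i-1}$ into (\ref{sec3:3}), the latter collapses to a single \emph{linear} equation in the one unknown $z:=(1-y^{i-1})^{-1}$, namely $[\tilde u^i-\tilde v^i(1-v^{i-1})(1-u^{i-1})^{-1}]\,z=\tilde u^i-\tilde v^i$. A single left inversion in $\mathbb{A}^\times$ recovers $y^{i-1}$, and then $x^{i-1}$, as rational functions of $(\mathbf u,\mathbf v)$, index by index. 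Hence $R^{-1}$ is rational and $R$ is birational, with no appeal to centrality.

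For the second part, I would treat the companion problem with the same two relations, now with $(\mathbf x,\mathbf v)$ known and $(\mathbf u,\mathbf y)$ unknown. Relation (\ref{sec3:3}) expresses $u^i$ through $y^{i-1}$, while (\ref{sec3:2}) expresses $u^i$ through $y^i$; eliminating $u^i$ produces a fractional-linear recursion $y^i=a^i-b^i\,(y^{i-1})^{-1}$, with $a^i,b^i$ built rationally from $(\mathbf x,\mathbf v)$. This is a non-Abelian M\"obius recursion around the cycle $i=1,\dots,N$ modulo $N$, whose composition defines a monodromy; the periodicity $y^i=y^{i+N}$ then imposes a \emph{closing condition} that is in general a genuine non-Abelian quadratic (Riccati) equation for the seed $y^N$, and therefore not solvable rationally.

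This closing condition is the crux. The product of the coefficients $b^i$ around the loop is, after the telescoping of the $(1-x^i)$ factors, governed by the products $\mathcal{P}^{N,k}$ of Corollary \ref{lemma:3.20}; under the centrality assumptions (\ref{sec3:5}) these products are central, which forces the quadratic term of the closing equation to trivialise and reduces it to a linear equation with a unique rational solution for the seed. Propagating that seed through the recursion recovers all $y^i$ rationally, and $u^i$ then follows from (\ref{sec3:1}); hence $R^c$ is birational, and together with the first part this yields quadrirationality. I expect the main difficulty to lie in making the degeneration of the Riccati closing condition explicit --- that is, in verifying that centrality exactly annihilates the quadratic monodromy term --- rather than in any of the local eliminations, which are routine.
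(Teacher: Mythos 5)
Your first part is sound and is essentially the paper's own route: the paper also inverts the map by solving the compatibility conditions (\ref{sec3:2})--(\ref{sec3:3}) backwards, only it records the closed-form answer (\ref{sec3:6}) rather than organizing the computation as your local linear solve; the two agree after simplification, and the generic invertibility of your coefficient $\tilde u^i-\tilde v^i\left(1-v^{i-1}\right)\left(1-u^{i-1}\right)^{-1}$ is at the same level of rigour as the inverses appearing in (\ref{sec3:6}). Your second part also begins exactly as the paper does: eliminating $u^i$ yields the paper's relation (\ref{sec3:9}), $y^iy^{i-1}-\left(x^i+A^i\right)y^{i-1}+A^ix^{i-1}=0$ with $A^i=\left(1-x^i\right)v^i\left(1-x^{i-1}\right)^{-1}$, which is precisely your recursion $y^i=a^i-b^i\left(y^{i-1}\right)^{-1}$.

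The gap is in your closing step, which you yourself flag as the crux, and the mechanism you sketch there is not the right one. The centrality assumptions (\ref{sec3:5}) concern the products of the \emph{variables}, $x^N\cdots x^1=p$ and $y^N\cdots y^1=q$, not the loop product of your coefficients $b^i=A^ix^{i-1}$. Telescoping that product (using that $x^{i-1}$ commutes with $\left(1-x^{i-1}\right)^{-1}$) leaves the alternating word $\left(1-x^i\right)v^ix^{i-1}v^{i-1}x^{i-2}\cdots$, which is not one of the products $\mathcal{P}^{N,k}$ of Corollary \ref{lemma:3.20} and has no reason to be central in $\mathbb{A}^\times$; and even if it were central, that would not annihilate the quadratic term of the monodromy fixed-point equation, whose coefficient is an individual entry of the monodromy matrix rather than any determinant-like product. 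The paper's mechanism is different and never solves a fixed-point equation at all: multiplying (\ref{sec3:9}) on the right by $y^{i-2},\,y^{i-2}y^{i-3},\ldots$ and resubstituting gives the telescoped identity (\ref{sec3:10}), $\prod_{l=0}^{m}y^{i-l}-f_my^{i-m}+f_{m-1}A^{i-m+1}x^{i-m}=0$, with $f_m$ determined by the recurrence (\ref{sec3:11})--(\ref{sec3:12}) from the data $({\bf x},{\bf v})$ alone; at $m=N-1$ the only nonlinearity is the full cyclic product $\prod_{l=0}^{N-1}y^{i-l}$, which the constraint (\ref{sec3:5}) pins to the central constant $q$ (cyclic rotations of a central product coincide, since $AB=q$ central implies $BA=q$), so (\ref{sec3:13}) is linear and yields each $y^{i-N+1}=f_{N-1}^{-1}\left(q+f_{N-2}A^{i-N+2}x^{i-N+1}\right)$ directly, after which (\ref{sec3:7}) gives $u^i$. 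In other words, centrality does not degenerate the Riccati closing condition; it supplies an extra constraint on the unknowns that replaces the nonlinear term by a constant. As written, your proof of quadrirationality is incomplete at exactly the step you identified as the main difficulty, and the repair is the telescoping argument above rather than any statement about the coefficients $b^i$.
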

\begin{proof}
First we prove that  (\ref{sec3:1}) is birational. The compatibility conditions (\ref{sec3:2}), (\ref{sec3:3}),  can be solved rationally for $x^{(i)},y^{(i)}$ in terms of $u^{(i)},v^{(i)}$ and that proves biratonality  of the system. Specifically from (\ref{sec3:2}), (\ref{sec3:3}) we obtain:
\begin{align} \label{sec3:6}
\begin{aligned}
x^{(i)}=\left(1-v^{(i+1)}\right)\left(u^{(i+1)}-v^{(i+1)}\right)^{-1}u^{(i+1)}\left(u^{(i)}-v^{(i)}\right)\left(1-v^{(i)}\right)^{-1},\\
y^{(i)}=\left(1-u^{(i+1)}\right)\left(v^{(i+1)}-u^{(i+1)}\right)^{-1}v^{(i+1)}\left(v^{(i)}-u^{(i)}\right)\left(1-u^{(i)}\right)^{-1},\\
i=1,2,\ldots, N.
\end{aligned}
\end{align}
From (\ref{sec3:6}) it is easy to validate the following formulae
\begin{align*} 
\begin{aligned}
\mathcal{P}^{(N,k)}({\bf x})=\left(1-v^{(N+k)}\right)^{-1}\left(u^{(N+k)}-v^{(N+k)}\right)\mathcal{P}^{(N,k+1)}({\bf u})\left(u^{(N+k)}-v^{(N+k)}\right)^{-1}\left(1-v^{(N+k)}\right),\\
\mathcal{P}^{(N,k)}({\bf y})=\left(1-u^{(N+k)}\right)^{-1}\left(v^{(N+k)}-u^{(N+k)}\right)\mathcal{P}^{(N,k+1)}({\bf v})\left(v^{(N+k)}-u^{(N+k)}\right)^{-1}\left(1-u^{(N+k)}\right),\\
k=1,\ldots,N,
\end{aligned} &&
\end{align*}
where the expressions $\mathcal{P}^{(N,k)}$  are defined in  Remark \ref{lemma:3.20}.

To prove that (\ref{sec3:1}) is quadrirational it suffices to solve (\ref{sec3:1})  rationally for $u^{(i)},y^{(i)}$ in terms of $x^{(i)},v^{(i)}$  and show that the resulting map is birational.

From (\ref{sec3:2}) and  (\ref{sec3:3}) we obtain respectively
 \begin{align} \label{sec3:7}
\left(1-u^{(i)}\right)^{-1}=\left(1-v^{(i)}\right)^{-1} \left(1-x^{(i)}\right)^{-1} \left(1-y^{(i)}\right),
\end{align}
and
 \begin{align} \label{sec3:8}
\left(1-u^{(i)}\right)^{-1}u^{(i)}=\left(1-v^{(i)}\right)^{-1}v^{(i)} \left(1-x^{(i-1)}\right)^{-1}x^{(i-1)}\left(y^{(i-1)}\right)^{-1}\left(1-y^{(i-1)}\right).
\end{align}
 Substituting these expressions into the lhs of (\ref{sec3:4}) we get
 \begin{align} \label{sec3:9}
y^{(i)}y^{(i-1)}-\left(x^{(i)}+A^{(i)}\right)y^{(i-1)}+A^{(i)}x^{(i-1)}=0,
\end{align}
where
\begin{align*}
A^{(i)}:=\left(1-x^{(i)}\right)v^{(i)}\left(1-x^{(i-1)}\right)^{-1}.
\end{align*}
Multiplying from the right equation (\ref{sec3:9}) respectively with
$$
y^{(i-2)},\quad y^{(i-2)} y^{(i-3)},\quad \ldots, \quad \prod_{l=2}^{m}y^{(i-l)},
$$
 and by substituting at each step from (\ref{sec3:9})  the expressions $y^{(k)}y^{(k-1)}$ we obtain:
\begin{align} \label{sec3:10}
\prod_{l=0}^m y^{(i-l)}-f^{(i)}_m y^{(i-m)}+f^{(i)}_{m-1} A^{(i-m+1)}x^{(i-m)}=0,&& m\geq 2,
\end{align}
where $f^{(i)}$ satisfies the recurrences
\begin{align} \label{sec3:11}
f^{(i)}_{n+2}=f^{(i)}_{n+1}\left(x^{(i-n-1)}+A^{(i-n-1)}\right)-f^{(i)}_n A^{(i-n)}x^{(i-n-1)},&& n\in\mathbb{Z},
\end{align}
with
\begin{align} \label{sec3:12}
f^{(i)}_0=1,&&f^{(i)}_1=x^{(i)}+A^{(i)}, && i=1,\ldots, N.
\end{align}

Setting $m=N-1$ and by assuming the centrality assumptions (\ref{sec3:5}), equations (\ref{sec3:10}) read
\begin{align} \label{sec3:13}
q-f^{(i)}_{N-1} y^{(i-N+1)}+f^{(i)}_{N-2} A^{(i-N+2)}x^{(i-N+1)}=0,
\end{align}
where $f^{(i)}_{N-1},f^{(i)}_{N-2}$ are determined by the recurrences (\ref{sec3:11}),(\ref{sec3:12}). So from  (\ref{sec3:13}) we have obtained $y^{(i)}$ as a function of $x^{(i)},v^{(i)},$ $i=1,\ldots, N,$ and together with (\ref{sec3:7}) we finally obtain $y^{(i)},u^{(i)},$ in terms of $v^{(i)},x^{(i)},$ i.e. the companion hierarchy of  maps of the hierarchy of  maps $\mathcal{K}^{(1)}$. In exactly similar manner we can express rationally   $x^{(i)},v^{(i)}$ in terms of $u^{(i)},y^{(i)},$ that proves birationality of the companion hierarchy of maps, and that completes the proof.
\end{proof}
%
\begin{prop} \label{prop3:YB}
 Let the expressions  $f^{(i)}_{N-1},f^{(i)}_{N-2},$  $i=1,\ldots, N,$ be determined by the recurrences, (\ref{sec3:11}),(\ref{sec3:12}), and
the expressions  $g^{(i)}_{N-1},g^{(i)}_{N-2},$  be  determined by the recurrences
\begin{align} \label{rec3:11}
g^{(i)}_{n+2}=\left(v^{(i+n)}+B^{(i+n+1)}\right)g^{(i)}_{n+1}-v^{(i+n)}B^{(i+n)}g^{(i)}_n,&& n\in\mathbb{Z},
\end{align}
with
\begin{align} \label{recc3:12}
g^{(i)}_0=1,&&g^{(i)}_1=v^{(i-1)}+B^{(i)}, && B^{(i)}:=\left(1-v^{(i)}\right)^{-1}x^{(i-1)}\left(1-v^{(i-1)}\right),&&i=1,\ldots, N.
\end{align}
Assuming the centrality assumptions (\ref{sec3:5}) that results quadrirationality,  the companion hierarchy $c\mathcal{K}^{(1)}$ of the hierarchy of maps $\mathcal{K}^{(1)}$ explicitly reads
\begin{align*}
c\mathcal{K}^{(1)}: (x^{(1)},\ldots, x^{(N)},v^{(1)},\ldots, v^{(N)})\mapsto  (u^{(1)},\ldots, u^{(N)},y^{(1)},\ldots, y^{(N)}),
\end{align*}
where
\begin{align} \label{sec3:1000}
\begin{aligned}
u^{(i)}=\left(p+v^{(i)}B^{(i)}g^{(i-N+1)}_{N-2}\right)\left(g^{(i-N+1)}_{N-1}\right)^{-1},\\
y^{(i)}=\left(f^{(i+N-1)}_{N-1}\right)^{-1} \left(q+f^{(i+N-1)}_{N-2}A^{(i+1)}x^{(i)}\right),
\end{aligned}& & i=1,\ldots, N,
\end{align}
and it serves as a hierarchy  of Yang-Baxter maps.
\end{prop}
\begin{proof}
The formulas of $y^{(i)}$ in (\ref{sec3:1000}) are just the solved for $y^{(i)}$ form  of (\ref{sec3:13}).  To obtain the formulas for $u^{(i)}$ of (\ref{sec3:1000}), so that we  obtain the explicit formulas for the hierarchy $c\mathcal{K}^{(1)}$, we have to substitute the formulas of $y^{(i)}$  to  (\ref{sec3:7}) and solve for $u^i.$ Equivalently, from the defining relations of $\mathcal{K}^{(1)}$ (\ref{sec3:2}) and  (\ref{sec3:3}), if we eliminate $y^{(i)}$ we will obtain $u^{(i)}$ as  functions  of $v^{(i)}$ and $x^{(i)},$ hence the first part of the formulas (\ref{sec3:1000}). This is what we do for the rest of the proof.

From (\ref{sec3:2}) and  (\ref{sec3:3}) we obtain
 \begin{align} \label{sec30:9}
u^{(i+1)}u^{(i)}-u^{(i+1)}\left(v^{(i)}+B^{(i+1)}\right)+v^{(i+1)}B^{(i+1)}=0,
\end{align}
where
\begin{align*}
B^{(i)}:=\left(1-v^{(i)}\right)^{-1}x^{(i-1)}\left(1-v^{(i-1)}\right).
\end{align*}
Multiplying from the left equation (\ref{sec30:9}) respectively with
$$
u^{(i+2)},\quad u^{(i+3)} y^{(i+2)},\quad \ldots,\quad \prod_{l=0}^{m-2}u^{(i+m-l)}
$$
 and by substituting at each step from (\ref{sec30:9})  the expressions $u^{(k+1)}u^{(k)}$ we obtain:
\begin{align} \label{sec30:10}
\prod_{l=0}^m u^{(i+m-l)}-u^{(i+m)}g^{(i)}_m+v^{(i+m)}B^{(i+m)}g^{(i)}_{m-1}=0,&& m\geq 2,
\end{align}
where $g^{(i)}$ satisfies the recurrences
\begin{align} \label{sec30:11}
g^{(i)}_{n+2}=\left(v^{(i+n+1)}+B^{(i+n+2)}\right)g^{(i)}_{n+1}-v^{(i+n+1)}B^{(i+n+1)}g^{(i)}_n,&& n\in\mathbb{Z},
\end{align}
with
\begin{align} \label{sec30:12}
g^{(i)}_0=1,&&g^{(i)}_1=v^{(i-1)}+B^{(i)}, && i=1,\ldots, N.
\end{align}

Setting $m=N-1$ and by assuming the centrality assumption (\ref{sec3:5}), equations (\ref{sec30:10}) read
\begin{align} \label{sec30:13}
p-u^{(i+N-1)}g^{(i)}_{N-1}+v^{(i+N-1)}B^{(i+N-1)}g^{(i)}_{N-2}=0,
\end{align}
where $g^{(i)}_{N-1},g^{(i)}_{N-2}$ are determined by the recurrences (\ref{sec30:11}),(\ref{sec30:12}). Solving (\ref{sec30:13}) for $u^{(i)},$ we obtain  the first expression of (\ref{sec3:1000}). Since $c\mathcal{K}^{(1)}$ serves as the companion hierarchy of a $3D-$compatible hierarchy,  $c\mathcal{K}^{(1)}$ is hierarchy of Yang-Baxter maps (see Section \ref{Section:2})  and that completes the proof.

\end{proof}

\begin{remark}
By following an exactly similar analysis as in this Section, we can show the quadrirationality of the hierarchy  $\mathcal{K}^{(0)}$ and provide its companion that serves as a hierarchy of Yang-Baxter maps.
\end{remark}

\section{Hierarchies of integrable difference systems in vertex variables} \label{Section:4}

The hierarchies of integrable difference systems in edge variables  associated with the  $3D$-compatible maps $\mathcal{K}^{(0)}$ and $\mathcal{K}^{(1)},$ can be also rewritten in terms of vertex variables. The procedure that incorporates the transition  from edge (or even face) to vertex variables in integrable difference systems, was introduced in \cite{doliwa-santini}. This procedure is widely applied nowadays \cite{Kassotakis_2011,Kassotakis_2012,Doliwa_2013,Kassotakis_2018,Fordy_2017,Kassotakis_2021}, under the attributed  name {\em potentialisation}.

In what follows we apply the potentialization procedure to the integrable hierarchy $\mathcal{K}^{(0)}$, to obtain integrable hierarchies of difference systems in vertex variables. Applying the potentialization procedure to the hierarchy $\mathcal{K}^{(1)},$  leads to point equivalent integrable hierarchies of difference systems with the hierarchy $\mathcal{K}^{(0)}$.

From the defining relations of the $3D-$compatible map $\mathcal{K}^{(0)},$ or equivalently from the compatibility conditions (\ref{sec3:2}),(\ref{sec3:3}), it is guaranteed the existence of two sets of potential functions that allow us to rewrite the $3D-$compatible map as  two (related by a B\"acklund transformation)  integrable hierarchies of difference systems with dynamical variables defined on the vertices of the $\mathbb{Z}^2$ graph.

\subsection{The first set of potential functions and the lattice-modified Gel'fand-Dikii  hierarchy}

The  integrable hierarchy $\mathcal{K}^{(0)},$ in its polynomial form consists of the  sets of equations (\ref{sec3:02}) and (\ref{sec3:03}). The set of equations (\ref{sec3:02}), guarantees the existence of potential functions $\phi^{(i)},$ $i=1,\ldots,N,$  such that
\begin{align}\label{sec4:1}
\begin{aligned}
\left(1-x^{(i)}\right)^{-1}=\phi^{(i)}_1\left(\phi^{(i-1)}\right)^{-1}, & \left(1-y^{(i)}\right)^{-1}=\phi^{(i)}_2\left(\phi^{(i-1)}\right)^{-1},\\
\left(1-u^{(i)}\right)^{-1}=\phi^{(i)}_{12}\left(\phi^{(i-1)}_2\right)^{-1}, & \left(1-v^{(i)}\right)^{-1}=\phi^{(i)}_{12}\left(\phi^{(i-1)}_1\right)^{-1},
\end{aligned}&&  i=1,\ldots,N.
\end{align}
In terms of the potential functions $\phi^i,$ the set of equations (\ref{sec3:02}) is identically satisfied, while the set of equations (\ref{sec3:03}) becomes
\begin{align} \label{sec4:2}
\fl\begin{multlined}[t]\left(\phi^{(i)}_{12}\left(\phi^{(i-1)}_2\right)^{-1}-1\right)\left(\phi^{(i)}_{2}\left(\phi^{(i-1)}\right)^{-1}-1\right)\\
=\left(\phi^{(i)}_{12}\left(\phi^{(i-1)}_1\right)^{-1}-1\right)
\left(\phi^{(i)}_{1}\left(\phi^{(i-1)}\right)^{-1}-1\right),
\end{multlined}
\end{align}
$i=1,\ldots,N$ and constitute a hierarchy of difference systems in vertex variables.
\begin{prop} \label{prop4:1}
For the hierarchy of difference systems in vertex variables (\ref{sec4:2}) we have:
\begin{enumerate}
\item  it arises as the compatibility conditions of the Lax equation
\begin{align*}
L(\bff{ \phi}_{12},{\bff \phi}_2;\lambda)L({\bff \phi}_{2},{\bff \phi};\lambda)=L({\bff \phi}_{12},{\bff \phi}_1;\lambda)L({\bff \phi}_{1},{\bff \phi};\lambda),
\end{align*}
 associated with the strong Lax matrix
\begin{align*}
L={\scriptsize\begin{pmatrix}
                      \phi^{(N)}_1\left(\phi^{(N-1)}\right)^{-1}-1&0&\cdots&0&\lambda\,\phi^{(N)}_1\left(\phi^{(N-1)}\right)^{-1}\\
                      \phi^{(1)}_1\left(\phi^{(N)}\right)^{-1}&\phi^{(1)}_1\left(\phi^{(N)}\right)^{-1}-1& 0& \cdots &0 \\
                      0  &\phi^{(2)}_1\left(\phi^{(1)}\right)^{-1}&\ddots& {}  &\vdots\\
                      \vdots & &\ddots &\phi^{(N-2)}_1\left(\phi^{(N-1)}\right)^{-1}-1 &0\\
                       0     & 0 &   &  \phi^{(N-1)}_1\left(\phi^{(N-2)}\right)^{-1}    &\phi^{(N-1)}_1\left(\phi^{(N-2)}\right)^{-1}-1
                     \end{pmatrix};}
\end{align*}
\item it is multidimensional consistent;
\item it  respects the {\it rhombic} symmetry
\begin{align*}
\tau :(\bff{\phi},{\bff \phi}_1,{\bff \phi}_2,{\bff \phi}_{12})\mapsto ({\bff \phi},{\bff \phi}_2,{\bff \phi}_1,{\bff \phi}_{12}),&&
\sigma : ({\bff \phi},{\bff \phi}_1,{\bff \phi}_2,{\bff \phi}_{12})\mapsto ({\bff \phi}_{12},{\bff \phi}_2,{\bff \phi}_1,{\bff \phi});
\end{align*}
\item it is an integrable hierarchy of difference systems in vertex variables.
\end{enumerate}
\end{prop}
\begin{proof}
Let us prove the statements of this Proposition.
\begin{enumerate}
\item We substitute the expressions of the potential functions (\ref{sec4:1}) into the Lax matrix $\widehat M^{(N,1)}({\bf x};\lambda)$ of Proposition (\ref{prop:01}) to obtain the Lax matrix presented here.
\item The multidimensional consistency of (\ref{sec4:2})  is a direct consequence of the multidimensional compatibility  of the underlying difference system in edge variables (\ref{sec3:01}). Indeed, for the system  (\ref{sec3:01})
      that its multidimensional extension reads
    \begin{align} \label{K1:eqs}
    X^{i,a}_b=\left(1-X^{i-1,b}\right)^{-1}\left(X^{i-1,a}-X^{i-1,b}\right)X^{i,a}\left(X^{i,a}-X^{i,b}\right)^{-1}\left(1-x^{i,b}\right),
    \end{align}
    $i=1,\ldots,N,\; a\neq b \neq c \neq a\in\{1,\ldots,n\},$
     it holds the multidimensional compatibility formula
    \begin{align}\label{sec4:3}
 X^{i,a}_{bc}=X^{i,a}_{cb},\quad i=1,\ldots,N,\;\; a\neq b \neq c \neq a\in\{1,\ldots,n\}.
 \end{align}
  Also for the potential functions $\phi^{(i)}$ we have
        \begin{align} \label{sec4:4}
  \left(1-X^{i,a}\right)^{-1}=\phi^{(i)}_a(\phi^{(i-1)})^{-1}, \quad i=1,\ldots,N, \quad a\in \{1,\ldots, n\}.
    \end{align}
From (\ref{sec4:3}) and (\ref{sec4:4}) we obtain
\begin{align*}
\phi^{(i)}_{abc}\left(\phi^{(i-1)}_{bc}\right)^{-1}=\phi^{(i)}_{acb}\left(\phi^{(i-1)}_{cb}\right)^{-1}.
\end{align*}
So in order to prove the multidimensional consistency of (\ref{sec4:2}) that is
\begin{align*}
\phi^{(i)}_{abc}=\phi^{(i)}_{acb}, \quad i=1,\ldots,N,\;\; a\neq b \neq c \neq a\in\{1,\ldots,n\},
\end{align*}
it suffices to prove that
\begin{align} \label{pformula}
 \phi^{(i)}_{ab}=\phi^{(i)}_{ba},&& a\neq b \in\{1,\ldots,n\}.
 \end{align}
Indeed  by shifting appropriately  (\ref{sec4:4}) we obtain
\begin{align*}
\left(1-X^{i,a}_b\right)^{-1}\phi^{(i-1)}_{b}=\phi^{(i)}_{ab},&&\left(1-X^{i,b}_a\right)^{-1}\phi^{(i-1)}_{a}=\phi^{(i)}_{ba}
  \end{align*}
  and by using again (\ref{sec4:4}) to eliminate $\phi^{(i-1)}_{b}$ and $\phi^{(i-1)}_{a},$ we arrive to
\begin{align} \label{fp:000}
\begin{aligned}
\left(1-X^{i,a}_b\right)^{-1}\left(1-X^{i-1,b}\right)^{-1}\phi^{(i-2)}=\phi^{(i)}_{ab},\\
\left(1-X^{i,b}_a\right)^{-1}\left(1-X^{i-1,a}\right)^{-1}\phi^{(i-2)}=\phi^{(i)}_{ba}.
\end{aligned}
  \end{align}
From (\ref{K1:eqs}) we get
\begin{align} \label{pppw}
\begin{aligned}
1-X^{i,a}_b=\left(1-X^{i-1,b}\right)^{-1}\left(X^{i-1,a}-X^{i-1,b}\right)\widehat \Gamma^{(i)}(a,b),\\
1-X^{i,b}_a=\left(1-X^{i-1,a}\right)^{-1}\left(X^{i-1,b}-X^{i-1,a}\right)\widehat \Gamma^{(i)}(b,a),
\end{aligned}
\end{align}
where the expressions $\Gamma^{(i)}(a,b)$ are antisymmetric under the interchange $a\leftrightarrow b,$ and they are of  similar form to the ones of Lemma \ref{lemma1}. Using (\ref{pppw}) to eliminate $1-X^{i,a}_b$ and $1-X^{i,b}_a$ from (\ref{fp:000}), together with  the fact that $\Gamma^{(i)}(a,b)+\Gamma^{(i)}(b,a)=0,$ we obtain exactly (\ref{pformula})
and that proves the multidimensional consistency of  the hierarchy of difference systems in vertex variables (\ref{sec4:2}).
\item It can be easily verified that (\ref{sec4:2})  is invariant under $\tau,$ while acting with $\sigma$ on (\ref{sec4:2}) we get
\begin{align} \label{sec4:22}
\begin{multlined}[t]
\left(\phi^{(i)}\left(\phi^{(i-1)}_1\right)^{-1}-1\right)\left(\phi^{(i)}_{1}\left(\phi^{(i-1)}_{12}\right)^{-1}-1\right)\\
=\left(\phi^{(i)}\left(\phi^{(i-1)}_2\right)^{-1}-1\right)\left(\phi^{(i)}_{2}\left(\phi^{(i-1)}_{12}\right)^{-1}-1\right),
\end{multlined}
\end{align}
which is (\ref{sec4:2}) in disguise.
  Indeed, by acting on (\ref{sec4:2}) with  ${\bf T}_{-1} {\bf T}_{-2}$ we obtain
\begin{align*}
\begin{multlined}[t]
\left(\phi^{(i)}\left(\phi^{(i-1)}_{-1}\right)^{-1}-1\right)\left(\phi^{(i)}_{-1}\left(\phi^{(i-1)}_{-1-2}\right)^{-1}-1\right)\\
=\left(\phi^{(i)}\left(\phi^{(i-1)}_{-2}\right)^{-1}-1\right)\left(\phi^{(i)}_{-2}\left(\phi^{(i-1)}_{-1-2}\right)^{-1}-1\right),
\end{multlined}
\end{align*}
and if we perform the  change of the dependent variables $\phi^{(i)}_{m,n}= \Phi^{(i)}_{-m,-n},$ followed by the change of independent variables $m'=-m,n'=-n,$ we obtain that $\Phi^{(i)}_{m',n'}$ satisfies
(\ref{sec4:22}). Here with ${\bf T}_j$ we denote the forward shift operator in the $j-$th direction while with ${\bf T}_{-k}$ we denote the backward shift operator in the $k-$th direction i.e.
\begin{align*}
{\bf T}_1: \phi^{(i)}\mapsto  \phi^{(i)}_1,\;\; {\bf T}_{-1}: \phi^{(i)}\mapsto  \phi^{(i)}_{-1},\;\; {\bf T}_{-2}: \phi^{(i)}\mapsto  \phi^{(i)}_{-2},\;\; \mbox{etc}.
\end{align*}
\item Due to statements $(1)-(3)$, (\ref{sec4:2})  constitutes an integrable hierarchy of difference systems in vertex variables defined on the black-white (chessboard) lattice \cite{Papageorgiou:2009II,ABS:2009,Boll:2011}.
\end{enumerate}
\end{proof}

Note that the hierarchy of difference systems (\ref{sec4:2}), can be solved rationally only for the sets of variables  ${\bff \phi}$ and ${\bff \phi}_{12}$. If the centrality assumptions (\ref{sec3:5}) are imposed, that in terms of the potential functions ${\bff \phi}$ read
\begin{align} \label{sec4:5}
\prod_{l=1}^N\left(1-\phi^{(N-l)}\left(\phi^{(N+1-l)}_1\right)^{-1}\right)=p\in C(\mathbb{A}^\times),\\ \label{sec44:5} \prod_{l=1}^N\left(1-\phi^{(N-l)}\left(\phi^{(N+1-l)}_2\right)^{-1}\right)=q\in C(\mathbb{A}^\times),
\end{align}
 then (\ref{sec4:2}) can be solved  for any corner variable set, mimicking the  Abelian case.

 To recapitulate, hierarchy (\ref{sec4:2}) serves as an integrable hierarchy in vertex variables that respects the rhombic symmetry. If in addition the centrality assumptions (\ref{sec4:5}),(\ref{sec44:5}) are imposed, then one potential function can be eliminated and the resulting hierarchy consists of $N-1$ equations in $N-1$ potential functions. The resulting hierarchy  serves as the non-Abelian lattice-modified Gel'fand-Dikii  hierarchy, since its lowest member turns out to be the lattice-potential-modified  KdV equation  (in the abelian case derived in \cite{Bianchi:1894,Hir-sG,nij-qui-cap} and referred to as $(H3)_0$ in \cite{ABS}),   as the example that follows suggest. Remember that  the lattice-modified Gel'fand-Dikii  hierarchy in the Abelian case was firstly considered implicitly in \cite{Nijhoff:1992} and explicitly in \cite{Atkinson:2012}, whereas in the non-Abelian setting in \cite{Doliwa_2013}.
 \begin{example}[$N=2$]
  The first member of the hierarchy (\ref{sec4:2}) ($N=2$)  reads:
 \begin{align}\label{mkdv}
 \begin{multlined}[t]
\left(\phi^{(1)}_{12}\left(\phi^{(2)}_2\right)^{-1}-1\right)\left(\phi^{(1)}_{2}\left(\phi^{(2)}\right)^{-1}-1\right)
=\left(\phi^{(1)}_{12}\left(\phi^{(2)}_1\right)^{-1}-1\right)\\
\left(\phi^{(1)}_{1}\left(\phi^{(2)}\right)^{-1}-1\right),
\end{multlined} \\ \label{mkdv1111}
\begin{multlined}[t]
\left(\phi^{(2)}_{12}\left(\phi^{(1)}_2\right)^{-1}-1\right)\left(\phi^{(2)}_{2}\left(\phi^{(1)}\right)^{-1}-1\right)
=\left(\phi^{(2)}_{12}\left(\phi^{(1)}_1\right)^{-1}-1\right)\\
\left(\phi^{(2)}_{1}\left(\phi^{(1)}\right)^{-1}-1\right).
\end{multlined}
  \end{align}
  If we impose the centrality assumptions (\ref{sec4:5}),(\ref{sec44:5}) that now read
  \begin{align*}
  \begin{aligned}
  \left(1-\phi^{(1)}  \left(\phi^{(2)}_1\right)^{-1}\right)  \left(1-\phi^{(2)}  \left(\phi^{(1)}_1\right)^{-1}\right)=p,\\
     \left(1-\phi^{(1)}  \left(\phi^{(2)}_2\right)^{-1}\right)  \left(1-\phi^{(2)}  \left(\phi^{(1)}_2\right)^{-1}\right)=q,
     \end{aligned}
  \end{align*}
  we can eliminate f.i. $\phi^2$ and its shifts from  (\ref{mkdv}) to obtain
  \begin{align*}
  \left(1-q-\phi\, \phi_{12}^{-1}\right)\phi_2=  \left(1-p-\phi\, \phi_{12}^{-1}\right)\phi_1,
  \end{align*}
  where for simplicity we have denoted $\phi:=\phi^{(1)},$ $\phi_2:=\phi^{(1)}_2,$ etc. Under the re-parametrization $p\mapsto 1-1/p^2,$ $q\mapsto 1-1/q^2,$ followed by the point transformation $\phi_{m+m_1,n+n_1}\mapsto p^{m_1} q^{n_1} \phi_{m+m_1,n+n_1},$ the equation above takes the form
  \begin{align} \label{MKDV}
    \left(\frac{1}{q}-\frac{1}{p}\phi\, \phi_{12}^{-1}\right)\phi_2=\left(\frac{1}{p}-\frac{1}{q}\phi\, \phi_{12}^{-1}\right)\phi_1,
  \end{align}
  that is exactly $(H3)_0$ but extended in the non-Abelian domain.
 \end{example}

\subsection{The second set of potential functions and the lattice-$(Q3)_0$ Gel'fand-Dikii  hierarchy}

As we mentioned earlier, the   polynomial form of the integrable hierarchy $\mathcal{K}^{(0)},$ consists of the  sets of equations (\ref{sec3:02}) and (\ref{sec3:03}). The set of equations (\ref{sec3:03}), guarantees the existence of potential functions $\psi^i,$ $i=1,\ldots,N,$  such that
\begin{align}\label{sec4:11}
\begin{aligned}
\left(1-x^{(i)}\right)^{-1}x^{(i)}=\psi^{(i)}_1\left(\psi^{(i)}\right)^{-1}, & \left(1-y^{(i)}\right)^{-1}y^{(i)}=\psi^{(i)}_2\left(\psi^{i}\right)^{-1},\\
\left(1-u^{(i)}\right)^{-1}u^{(i)}=\psi^{(i)}_{12}\left(\psi^{(i)}_2\right)^{-1}, & \left(1-v^{(i)}\right)^{-1}v^{(i)}=\psi^{(i)}_{12}\left(\psi^{(i)}_1\right)^{-1},
\end{aligned}&&  i=1,\ldots,N.
\end{align}
In terms of the potential functions $\psi^{(i)},$ the set of equations (\ref{sec3:03}) is identically satisfied, while the set of equations (\ref{sec3:02}) becomes
\begin{align} \label{sec4:2:2}
\begin{multlined}[t]
\left(1+\psi^{(i)}_{12}\left(\psi^{(i)}_2\right)^{-1}\right)\left(1+\psi^{(i-1)}_2\left(\psi^{(i-1)}\right)^{-1}\right)=\left(1+\psi^{(i)}_{12}\left(\psi^{(i)}_1\right)^{-1}\right)
\\
\left(1+\psi^{(i-1)}_1\left(\psi^{(i-1)}\right)^{-1}\right),
\end{multlined}
\end{align}
$i=1,\ldots,N$ and constitute a hierarchy of difference systems in vertex variables.
\begin{prop} \label{prop4:2:1}
For the hierarchy of difference systems in vertex variables (\ref{sec4:2:2}) we have:
\begin{enumerate}
\item  it arises as the compatibility conditions of the Lax equation
\begin{align*}
L(\bff{ \psi}_{12},{\bff \psi}_2;\lambda)L({\bff \psi}_{2},{\bff \psi};\lambda)=L({\bff \psi}_{12},{\bff \psi}_1;\lambda)L({\bff \psi}_{1},{\bff \psi};\lambda),
\end{align*}
 associated with the strong Lax matrix
\begin{align*}
L={\scriptsize\begin{pmatrix}
                      \psi^{(N)}_1\left(\psi^{(N)}\right)^{-1}&0&\cdots&0&\lambda\,\left(1+\psi^{(N)}_1\left(\psi^{(N)}\right)^{-1}\right)\\
                      1+\psi^{(1)}_1\left(\psi^{(1)}\right)^{-1}&\psi^{(1)}_1\left(\psi^{(1)}\right)^{-1}& 0& \cdots &0 \\
                      0  &1+\psi^{(2)}_1\left(\psi^{(2)}\right)^{-1}&\ddots& {}  &\vdots\\
                      \vdots & &\ddots &\psi^{(N-2)}_1\left(\psi^{(N-2)}\right)^{-1} &0\\
                       0     & 0 &   & 1+\psi^{(N-1)}_1\left(\psi^{(N-1)}\right)^{-1}    &\psi^{(N-1)}_1\left(\psi^{(N-1)}\right)^{-1}
                     \end{pmatrix};}
\end{align*}
\item it is multidimensional consistent;
\item it  respects the {\it rhombic} symmetry
\begin{align*}
\tau :(\bff{\psi},{\bff \psi}_1,{\bff \psi}_2,{\bff \psi}_{12})\mapsto ({\bff \psi},{\bff \psi}_2,{\bff \psi}_1,{\bff \psi}_{12}),&&
\sigma : ({\bff \psi},{\bff \psi}_1,{\bff \psi}_2,{\bff \psi}_{12})\mapsto ({\bff \psi}_{12},{\bff \psi}_2,{\bff \psi}_1,{\bff \psi});
\end{align*}
\item it is an integrable hierarchy of difference systems in vertex variables.
\end{enumerate}
\end{prop}
\begin{proof}
The proof of this Proposition follows from the proof of the Proposition \ref{prop4:1}.
\end{proof}

Similarly as with  the hierarchy of difference systems (\ref{sec4:2}), the hierarchy of difference systems (\ref{sec4:2:2}), can be solved rationally only for the sets of variables  ${\bff \psi}$ and ${\bff \psi}_{12}$. If the centrality assumptions (\ref{sec3:5}) are imposed, that in terms of the potential functions ${\bff \psi}$ read
\begin{align} \label{sec4:2:5}
\prod_{l=1}^N\left(1+\psi^{(N-l+1)}\left(\psi^{(N-l+1)}_1\right)^{-1}\right)^{-1}=p\in C(\mathbb{A}^\times),\\ \label{sec4:55} \prod_{l=1}^N\left(1+\psi^{(N-l+1)}\left(\psi^{(N-l+1)}_2\right)^{-1}\right)^{-1}=q\in C(\mathbb{A}^\times),
\end{align}
 then (\ref{sec4:2:2}) can be solved  for any corner variable set, mimicking the  Abelian case.

  To recapitulate, hierarchy (\ref{sec4:2:2}) serves as an integrable hierarchy in vertex variables that respects the rhombic symmetry. If in addition the centrality assumptions (\ref{sec4:2:5}),(\ref{sec4:55}) are imposed, then one potential function can be eliminated and the resulting hierarchy consists of $N-1$ equations in $N-1$ potential functions. The resulting hierarchy  serves as the non-Abelian lattice-$(Q3)_0$ Gel'fand-Dikii  hierarchy, since its lowest member turns out to be $(Q3)_0$  (in the Abelian case goes back to \cite{nij-qui-cap2} and referred to as $(Q3)_0$ in \cite{ABS}),   as the example that follows suggest. Note that  the lattice-$(Q3)_0$ Gel'fand-Dikii  hierarchy in the Abelian case was firstly considered implicitly in \cite{Nijhoff:2011} and in \cite{Zhang:2012} it was explicitly derived just the second member of this hierarchy i.e. the  Boussinesq analogue of $(Q3)_0$.
 \begin{example}[$N=2$]
  The first member of the hierarchy (\ref{sec4:2:2}) ($N=2$)  reads:
 \begin{align}\label{q30}
 \begin{aligned}
 \begin{multlined}[t]
\left(\psi^{(1)}_{12}\left(\psi^{(1)}_2\right)^{-1}+1\right)\left(\psi^{(2)}_{2}\left(\psi^{(2)}\right)^{-1}+1\right)
=\left(\psi^{(1)}_{12}\left(\psi^{(1)}_1\right)^{-1}+1\right)\\
\left(\psi^{(2)}_{1}\left(\psi^{(2)}\right)^{-1}+1\right),\end{multlined}\\
 \begin{multlined}[t]
 \left(\psi^{(2)}_{12}\left(\psi^{(2)}_2\right)^{-1}+1\right)\left(\psi^{(1)}_{2}\left(\psi^{(1)}\right)^{-1}+1\right)
=\left(\psi^{(2)}_{12}\left(\psi^{(2)}_1\right)^{-1}+1\right)\\
\left(\psi^{(1)}_{1}\left(\psi^{(1)}\right)^{-1}+1\right).
\end{multlined}
 \end{aligned}
  \end{align}
  If we impose the centrality assumptions (\ref{sec4:2:5}),(\ref{sec4:55}) that now read
  \begin{align*}
  \begin{aligned}
  \left(1+\psi^{(2)}  \left(\psi^{(2)}_1\right)^{-1}\right)^{-1}  \left(1+\psi^{(1)}  \left(\psi^{(1)}_1\right)^{-1}\right)^{-1}=p,\\
     \left(1+\psi^{(2)}  \left(\psi^{(2)}_2\right)^{-1}\right)^{-1}  \left(1+\psi^{(1)}  \left(\psi^{(1)}_2\right)^{-1}\right)^{-1}=q,
\end{aligned}
  \end{align*}
  we can eliminate f.i. $\psi^{(2)}$ and its shifts from the first equation of (\ref{q30}) to obtain
  \begin{align*}
  \left(1+\psi_{12}\psi^{-1}_2\right)  \left(1-q\left(1+\psi \psi^{-1}_2\right)\right)=\left(1+\psi_{12}\psi^{-1}_1\right)  \left(1-p\left(1+\psi \psi^{-1}_1\right)\right),
  \end{align*}
  where for simplicity we have denoted $\psi:=\psi^{(1)},$ $\psi_2:=\psi^{(1)}_2,$ etc. Under the re-parametrization
  $  p\mapsto \frac{p^2}{p^2-1},$  $q\mapsto \frac{q^2}{q^2-1}$
   followed by the point transformation
     \begin{align*}
   \psi_{m+m_1,n+n_1}\mapsto (-1)^{m_1+n_1} p^{m_1} q^{n_1} \psi_{m+m_1,n+n_1},
   \end{align*}
    the equation above takes the form
  \begin{align} \label{Q30}
\left(\psi_2-p\psi_{12}\right)\left(\psi_2-\frac{q}{q^2-1}\left(q\psi_2-\psi\right)\right)^{-1}=
\left(\psi_1-q\psi_{12}\right)\left(\psi_1-\frac{p}{p^2-1}\left(p\psi_1-\psi\right)\right)^{-1}
  \end{align}
  that is exactly $(Q3)_0$ but extended in the non-Abelian domain.
 \end{example}

\begin{example}[$N>2$]
For $N>2$ we have the   lattice-$(Q3)_0$ Gel'fand-Dikii  hierarchy i.e. the set of equations (\ref{sec4:2:2}) that as we showed,  respect the rhombic symmetry.

From the centrality assumptions (\ref{sec4:2:5}),(\ref{sec4:55}), we obtain
\begin{align*}
1+\psi^{(N)}_1\left(\psi^{(N)}\right)^{-1}=\left(1-p\prod_{l=1}^{N-1}\left(1+\psi^{(l)}\left(\psi^{(l)}_1\right)^{-1}\right)\right)^{-1},\\
1+\psi^{(N)}_2\left(\psi^{(N)}\right)^{-1}=\left(1-q\prod_{l=1}^{N-1}\left(1+\psi^{(l)}\left(\psi^{(l)}_2\right)^{-1}\right)\right)^{-1}.
\end{align*}
Using the relations above, we can eliminate the potential $\psi^N$ and its shifts from (\ref{sec4:2:2}), to obtain the following  form of the   lattice-$(Q3)_0$ Gel'fand-Dikii  hierarchy.

\begin{multline*}
\left(1+\psi^{(1)}_{12}\left(\psi^{(1)}_2\right)^{-1}\right)\left(1-q\prod_{l=1}^{N-1}\left(1+\psi^{(l)}\left(\psi^{(l)}_2\right)^{-1}\right)\right)^{-1}\\
=\left(1+\psi^{(1)}_{12}\left(\psi^{(1)}_1\right)^{-1}\right)\left(1-p\prod_{l=1}^{N-1}\left(1+\psi^{(l)}\left(\psi^{(l)}_1\right)^{-1}\right)\right)^{-1},
\end{multline*}
\begin{multline*}
\left(1+\psi^{(i)}_{12}\left(\psi^{(i)}_2\right)^{-1}\right)\left(1+\psi^{(i-1)}_2\left(\psi^{(i-1)}\right)^{-1}\right)=\left(1+\psi^{(i)}_{12}\left(\psi^{(i)}_1\right)^{-1}\right)
\left(1+\psi^{(i-1)}_1\left(\psi^{(i-1)}\right)^{-1}\right),
\end{multline*}
\begin{multline*}
\left(1-p\prod_{l=1}^{N-1}\left(1+\psi^{(l)}_2\left(\psi^{(l)}_{12}\right)^{-1}\right)\right)^{-1}\left(1+\psi^{(N-1)}_2\left(\psi^{(N-1)}\right)^{-1}\right)\\
=\left(1-q\prod_{l=1}^{N-1}\left(1+\psi^{(l)}_1\left(\psi^{(l)}_{12}\right)^{-1}\right)\right)^{-1}\left(1+\psi^{(N-1)}_1\left(\psi^{(N-1)}\right)^{-1}\right),
\end{multline*}
$i=2,\ldots,N-1.$
\end{example}

\section{Conclusions}\label{Section:6}
In this article we introduced two families of hierarchies of non-Abelian    compatible maps. We have explicitly provided their associated Lax matrices,  we have shown that they serves as deformations of known hierarchies of maps \cite{Doliwa_2013,Doliwa_2014} and we have proven their  compatibility by explicitly providing their multidimensional compatibility formula. Furthermore, by imposing to these hierarchies certain compatible commutativity assumptions, we prove the quadrirationality of the latter and we provide explicitly the associated families of hierarchies of Yang-Baxter maps. Finally, we derived the corresponding integrable hierarchies of difference systems in non-commuting edge variables and the associated integrable difference hierarchies in vertex variables together with their Lax matrices. The lattice-modified Gel'fand-Dikki hierarchy and the lattice-NQC Gel'fand-Dikki hierarchy, both in non-commuting variables, together with the underlying integrable difference system in edge variables, were obtained.

The results of this article can be extended and/or generalized in various ways. We anticipate that by following \cite{Kouloukas:2011,Kassotakis:2019} we can obtain entwining hierarchies of maps associated with the ones presented here. Moreover,  by switching on the  deformation parameters (see the beginning of Section \ref{Section:3}), we can obtain degeneracies of the hierarchies $\mathcal{K}^{(i)},$ $i=1,2,$ that in turn will lead to degeneracies of the Gel'fand-Dikii hierarchies presented here, as well as their corresponding Yang-Baxter maps  which are expected to be related with the non-Abelian extension of the results in \cite{Kakei:2010}.

Furthermore $\mathcal{K}^{(1)},$ serves as a member of a  family of hierarchies which correspond to the following order $N\in\mathbb{N}$ Lax matrices
\begin{align*}
{\widehat L}^{(N,\kappa_1,\kappa_2)}:= D_L\, L^{(N,\kappa_1,\kappa_2)},
\end{align*}
where $D_L$ the diagonal  deformation matrix with entries  $(D_L)_{i,i}=\left(\alpha^{(i-\kappa_1)}-\beta x^{(i-\kappa_1)}\right)^{-1},$ $\beta,\alpha^j\in C(\mathbb{A}^\times)$ and
\begin{align*}
L^{(N,\kappa_1,\kappa_2)}:=\nabla^{(\kappa_2)}+\Delta^{(\kappa_2)}+\nabla^{(\kappa_1)}\,{\bf X}+\lambda\,\Delta^{(\kappa_1)}\,{\bf X},
\end{align*}
with the matrices $\nabla,\Delta,$ and ${\bf X}$ given as in Section \ref{Section:2} and $\kappa_1\neq\kappa_2\in\{1,\ldots,N\}.$ In this setting, $\mathcal{K}^{(1)}$ corresponds to the Lax matrix  ${\widehat L}^{(N,1,0)},$  ($\nabla^0+\Delta^0$  denotes the order $N$ identity matrix $I_N$).
Similarly $\mathcal{K}^{(0)},$ serves as a member of a dual family of hierarchies which correspond to the following order $N\in\mathbb{N}$ Lax matrices
\begin{align*}
{\widehat M}^{(N,\kappa_1,\kappa_2)}:=D_M\, M^{(N,\kappa_1,\kappa_2)},
\end{align*}
where $D_M$ the diagonal  deformation matrix with entries  $(D_M)_{i,i}=\left(\alpha-\beta^{(i)} x^{(i)}\right)^{-1},$ $\alpha, \beta^i\in C(\mathbb{A}^\times)$ and
\begin{align*}
M^{(N,\kappa_1,\kappa_2)}:=\nabla^{(\kappa_2)}\,{\bf X}+\Delta^{(\kappa_2)}\,{\bf X}+\nabla^{(\kappa_1)}+\lambda\,\Delta^{(\kappa_1)},
\end{align*}
where again $\kappa_1\neq\kappa_2\in\{1,\ldots,N\},$ and in this setting, $\mathcal{K}^{(0)}$ corresponds to the Lax matrix  ${\widehat M}^{(N,1,0)}.$ We postpone the study of the discrete spectral problems associated with the mentioned Lax matrices for a   future contribution.

\ack

\parbox{.135\textwidth}{\begin{tikzpicture}[scale=.03]
\fill[fill={rgb,255:red,0;green,51;blue,153}] (-27,-18) rectangle (27,18);
\pgfmathsetmacro\inr{tan(36)/cos(18)}
\foreach \i in {0,1,...,11} {
\begin{scope}[shift={(30*\i:12)}]
\fill[fill={rgb,255:red,255;green,204;blue,0}] (90:2)
\foreach \x in {0,1,...,4} { -- (90+72*\x:2) -- (126+72*\x:\inr) };
\end{scope}}
\end{tikzpicture}} 
{This research is part of the project No. 2022/45/P/ST1/03998  co-funded by the National Science Centre and the European Union Framework Programme
 for Research and Innovation Horizon 2020 under the Marie Sklodowska-Curie grant agreement No. 945339. For the purpose of Open Access, the author has applied a CC-BY public copyright licence to any Author Accepted Manuscript (AAM) version arising from this submission.}

\appendix
\section{Non-Abelian lattice-potential KdV equations} \label{app1}
A non-Abelian form of the lattice-potential KdV equation was firstly derived \cite{Field:2005}. Here we propose an alternative $D_4$ symmetric form of the non-Abelian lattice-potential KdV equation, as well as a non-Abelian form that respects the rhombic symmetry.

The following Lax matrix
 \begin{align*}
L({\bf x};\lambda):=\begin{pmatrix}
x^{(1)}&(x^{(1)}+x^{(2)})x^{(1)}-\lambda\\
1&x^{(1)}
\end{pmatrix},
 \end{align*}
 was introduced in \cite{Kassotakis:2:2021} and corresponds to a $3D-$compatible map that in the Abelian setting reduces to the companion map of the Adler map \cite{adler-1993}. The compatibility conditions $L({\bf u};\lambda)L({\bf y};\lambda)=L({\bf v};\lambda)L({\bf x};\lambda),$ are equivalent to the following set of equations
 \begin{align}\label{ap1}
 u^{(1)}+y^{(1)}=v^{(1)}+x^{(1)},\\ \label{ap2}
 \left(u^{(1)}+u^{(2)}\right)u^{(1)}-\left(y^{(1)}+y^{(2)}\right)y^{(1)}=\left(v^{(1)}+v^{(2)}\right)v^{(1)}-\left(x^{(1)}+x^{(2)}\right)x^{(1)}, \\ \label{ap3}
 \left(u^{(1)}+u^{(2)}\right)u^{(1)}+u^{(1)}y^{(1)}= \left(v^{(1)}+v^{(2)}\right)v^{(1)}+v^{(1)}x^{(1)},\\ \label{ap4}
  \begin{multlined}[t]\left(u^{(1)}+u^{(2)}\right)u^{(1)}y^{(1)}+u^{(1)}\left(y^{(1)}+y^{(2)}\right)y^{(1)}\\
  = \left(v^{(1)}+v^{(2)}\right)v^{(1)}x^{(1)}+v^{(1)}\left(x^{(1)}+x^{(2)}\right)x^{(1)},
\end{multlined}
 \end{align}
 that according to identification (\ref{notation1}), serve as a difference system in edge variables.
In that respect, equations (\ref{ap1}) and (\ref{ap2}), guarantee the existence of potential functions $\phi,\psi$ such that:
\begin{align}\label{potap0}
\begin{aligned}
x^{(1)}=\phi_1-\phi,&& y^{(1)}=\phi_2-\phi,\\
\left(x^{(1)}+x^{(2)}\right)x^{(1)}=\psi_1+\psi,&&\left(y^{(1)}+y^{(2)}\right) y^{(1)}=\psi_2+\psi.
\end{aligned}
\end{align}
In terms of these potential functions, (\ref{ap1}) and (\ref{ap2}) are identically satisfied while equations (\ref{ap3}) and (\ref{ap4}) respectively read
\begin{align*} 
  (\phi_{12}-\phi_2)(\phi_2-\phi)-(\phi_{12}-\phi_1)(\phi_1-\phi)=\psi_1-\psi_2,\\ 
    (\psi_{12}+\psi_2)(\phi_2-\phi)+(\phi_{12}-\phi_2)(\psi_2+\psi)=(\psi_{12}+\psi_1)(\phi_1-\phi)+(\phi_{12}-\phi_1)(\psi_1+\psi),
\end{align*}
and constitute a non-Abelian form of lattice-potential KdV equation. Note that this form of the lattice-potential KdV equation respects the rhombic symmetry, hence it is defined on a black-white lattice.

Furthermore, if the centrality assumptions $x^{(2)}x^{(1)}=-p,$ $y^{(2)}y^{(1)}=-q$ are assumed, relations (\ref{potap0}) give
\begin{align*} 
(\phi_1-\phi) (\phi_1-\phi)-p=\psi_1+\psi,&&(\phi_2-\phi) (\phi_2-\phi)-q=\psi_2+\psi,
\end{align*}
that serve as the B\"acklund transformation between the non-Abelian multiquadratic relation\footnote{In the Abelian case this multiquadratic relation was first obtained in \cite{Adler:2004} and serves as the nonlinear superposition principle of the B\"ackund transformation of the KdV equation} that the potential $\psi$ satisfies and
the $D_4$ symmetric form of the  non-Abelian  lattice-potential KdV equation that the potential function $\phi$ satisfies i.e.
\begin{align*} 
(\phi_1-\phi_2)(\phi_{12}-\phi)+(\phi_{12}-\phi)(\phi_{1}-\phi_2)=2(p-q).
\end{align*}


\end{document}